\newcommand{\au}{{\bf a}}
\newcommand{\FF}{{\mathbb F}}
\newcommand{\C}{{\cal C}}
\newcommand{\lo}{\overline \lambda}
\newcommand{\lu}{{\underline \lambda}}
\newcommand{\cc}{{\bf c}}
\newcommand{\remove}[1]{}
\newcommand{\x}{{\bf x}}
\newcommand{\HC}{{\cal H}}
\newcommand{\cF}{{\cal F}}
\newcommand{\cB}{{\cal B}}
\begin{document}
\titlerunning{Group Testing  in Arbitrary Hypergraphs}
\authorrunning{A. De Bonis}
\title{Group Testing  in Arbitrary Hypergraphs and Related Combinatorial Structures}
\author{ Annalisa De Bonis}
\institute{
Dipartimento di Informatica, Universit\`a di Salerno,
 Fisciano (SA), Italy 
\email{adebonis@unisa.it}}

\maketitle
\begin{abstract} 
We consider  a generalization of {\em group testing} where the potentially contaminated sets are 
the members of a given hypergraph $\cF=(V,E)$. This generalization finds  application in contexts where contaminations can be conditioned by some kinds of social and geographical clusterings.
We study non-adaptive algorithms, two-stage algorithms, and three-stage algorithms. Non-adaptive group testing algorithms are algorithms in which all tests are decided beforehand and therefore can be performed in parallel, whereas two-stage and three-stage group testing algorithms consist of 
two stages and three stages, respectively, with each stage being a  completely non-adaptive algorithm.
In {\em classical} group testing, the potentially infected sets are all subsets of up to a certain number of elements of the given input set. 
For classical group testing, it is known that there exists a correspondence between   non-adaptive algorithms and  superimposed codes, and between two-stage group testing and disjunctive list-decoding codes and selectors. 
Bounds on the number of tests for those algorithms are derived from the bounds on the dimensions of the corresponding combinatorial structures. Obviously, the upper bounds for the classical case apply also to our group testing model. In the present paper, we aim at improving on those upper bounds by leveraging on the characteristics of the particular hypergraph at hand. 
In order to cope with our version of the problem, we introduce new combinatorial structures that generalize the notions of {\em classical} selectors and superimposed codes.
\end{abstract}
\section{Introduction}
Group testing consists in detecting the defective members of a set of objects $O$ by performing tests on properly chosen subsets ({\em pools}) of the given set $O$.  A test yields a ``yes" response if the tested pool contains one or more defective elements, and a ``no" response otherwise. The goal is to find all defectives by using as few tests as possible. Group testing origins date back to World War II when it was introduced as a possible technique for mass blood testing  \cite{dor}.  Since then, group testing has become a  fundamental problem in computer science and has been widely investigated in the literature both for its practical implications and for the many theoretical challenges it poses. Applications of group testing span a wide variety of situations ranging  from conflict resolution algorithms for multiple-access systems \cite{wolf},  fault diagnosis in optical networks \cite{optical}, quality control in product testing \cite{sobel}, failure  detection in wireless sensor networks \cite{faultSensor}, data compression \cite{hl}, and many others. 
Among the modern applications of group testing,  some of the most important ones are related to the field of  molecular biology, where group testing is especially employed in the design of  screening experiments. Du and Hwang  \cite{dh1}  provide an extensive coverage of the most relevant applications of group testing  in this area. 
\remove{
The different contexts to which group testing applies often call for variations of the classical model that best adapt to the characteristics of the problems. These variants concern the test model   \cite{jcb}, \cite{tcs3},  \cite{TCS},  \cite{it}, the number of pursued defective elements  \cite{majority}, \cite{disjointGT}, as well as the structure of the test groups  \cite{disjointGT},   \cite{tcs1} and the number of contaminated test groups \cite{jco}.
}
In classical  group testing, the set of defectives is any of the possible subsets of size less than or equal to a certain parameter $d$.
In the present paper, we consider a more general version of group testing parameterized by a hypergraph $\cF=(V,E)$,
 with the contaminated set being one of the hyperedges of $E$. 

\smallskip
\par\noindent{\bf Related work  and our contribution.}
Classical group testing has been studied very thoroughly both with respect to {\em adaptive strategies}, i.e. algorithms that at each step decide which group to test by looking at the responses of previous tests, and with respect to {\em non-adaptive} strategies, i.e., strategies in which all tests are decided beforehand.
It is well-know that the best adaptive strategies achieve the information theoretic lower bound $\Omega(d\log(n/d))$, 
where $n$ is the total number of elements and $d$ is the upper bound on the number of defectives.
Further, it is known that
non-adaptive strategies for classical group testing are much more costly than their adaptive counterparts. 
The  minimum number of tests used by the non-adaptive procedures is estimated by the minimum length of certain combinatorial structures known under the name of
 $d$-{\em cover free families}, or equivalently, $d$-{\em superimposed codes} and {\em strongly selective families} \cite{cms},  \cite {dyry}, \cite{erff}, \cite{kts}.
The known bounds for these combinatorial structures \cite{alon}, \cite{dyry}, \cite{rus} imply that the number of tests of any non-adaptive group testing algorithm is lower bounded by
$\Omega((d^2/ \log d) \log n)$ and that there exist non-adaptive group testing algorithms that use $O(d^2\log n)$ tests. Of particular practical interest in the field
of biological screening are two-stage group testing procedures, i.e., procedures consisting of two stages each of which is a completely non-adaptive algorithm.
As far as it concerns classical group testing, the best two-stage algorithms  \cite{siam} are asymptotically as efficient as the best completely  adaptive strategy in that they achieve the information theoretic lower bound $\Omega(d\log(n/d))$. 

In this paper, we focus on non-adaptive group testing and two-stage group testing. 
 In particular, we are interested in {\em trivial} two-stage group testing algorithms, i.e., two-stage algorithms that in the second stage are allowed only to perform tests on individual elements. These algorithms are very useful in practice since in many applications confirmatory tests on individual elements are needed anyway to ensure that they are really defective.
 We also give existential results for three-stage group testing algorithms, i.e., algorithms that consist of three stages, each of which is a non-adaptive algorithm. 
 
 The version of group testing  considered in the present paper  has been initiated only recently in \cite{edge1} and continued in \cite{edge2}.
Similar search models were previously considered by the authors of  \cite{niko} who
assumed a known community structure in virtue of which the population is partitioned into separate families and the defective hyperedges are those that contain elements from a certain number of families.
 While in that paper the information on the structure of potentially infected groups is used to improve on the efficiency of the group testing algorithms, other papers, 
  \cite{goenka}, \cite{zrb},
 exploit this knowledge to improve on the efficiency of decoding the tests' responses. 
 A formal definition of group testing in general hypergraphs  has been given in \cite{edge1}.
 The authors of \cite{edge1} consider both adaptive and non-adaptive group testing. For the adaptive setting, when hyperedges in $E$ are  
of size exactly $d$, they give an $O(\log{|E|}+d\log^2{d})$ algorithm that is close to the  $\Omega(\log{|E|}+d)$  lower bound they also prove in the paper.
In the non-adaptive setting, they exploit a random coding technique to prove an $O(\frac dp\log{|E|})$ bound on the number of tests, 
where $d$ is the maximum size of a set $e \in E$ and $p$ is a lower bound on the size of the difference $e'\setminus e$ between any two hyperedges  $e,e' \in E$. 
In \cite{edge2} the author presents a new adaptive algorithm for generalized group testing, which is
asymptotically optimal if $d = o(\log_2 |E|)$  and, for $d = 2$,  gives an asymptotically
optimal  algorithm that works in three stages.

In this paper, we  formally define  combinatorial structures that are substantially equivalent to non-adaptive algorithms  for group testing in general hypergraphs. 
The combinatorial structures introduced in this paper
 extend the above-mentioned classical superimposed codes \cite{kts} in a way such that  the desired property is not enforced for all subsets of up to a certain number
 of codewords 
  but  only for subsets of codewords associated to the hyperedges of the given  hypergraph. 
Constructions for these generalized superimposed codes allow to achieve, in the non-adaptive setting, the same $O(\frac dp\log{|E|})$ upper bound obtained  in \cite{edge1}. 
In order to design our algorithms, we introduce a notion of selectors, also parameterized by the  set of hyperedges $E$, that generalizes the notion of $(k,m,n)$-selector introduced in \cite{siam}. For particular values of the involved parameters, our selectors correspond to non-adaptive algorithms for our group testing problem, These selectors are also at the basis of our two-stage and three-stage algorithms. In particular, we give a trivial two-stage algorithm that uses $O(
\frac {qd}\chi \log |E|+dq)$ tests, where $\chi$ is a lower bound on  $|\bigcup_{i=1}^q e'_i\setminus e|$, for any $q+1$  distinct hyperedges $e,e'_1\ldots,e'_q$. If for 
a given constant  $q\geq 1$, it holds that  $|\bigcup_{i=1}^q e'_i\setminus e|=\Omega(d)$, then this algorithm asymptotically achieves the above-mentioned $\Omega(\log |E|+d)$ lower bound.
Further, we give an $O(\sqrt d\log |E|+d)$ three-stage group testing algorithm.
\remove{
On the other hand, we remark that the notion of $E$-separable code we introduce here, is implicitly employed
in the design of the non-adaptive algorithms
of \cite{edge_1} and \cite{edge_2}. Both papers provide an $O(d\log E)$ upper bound on the number of tests used by their non-adaptive algorithm. Here,we improve on that upper bound and provide an almost matching lower bound by providing a lower bound on the length of $E$-separable codes.}

\section{Notations and terminology}\label{sec:notation}
For any positive integer $m$, we denote by $[m]$ the set of integers $\{1,\ldots,m\}$. 
A hypergraph is
a pair $\cF=(V,E)$, where 
$V$ is a finite set and $E$ is a family of subsets of $V$. The elements of $E$ will be called
hyperedges. If all hyperedges of $E$ have the same size $d$ then the hypergraph is said to be $d$-uniform.

In the present paper,  the hypergraph specifying the set of potentially conta\-minated sets is assumed to have $V=[n]$.
Let $\cF=([n],E)$ be a given hypergraph, and 
 let $\chi =\min\{ |\bigcup_{i=1}^q e'_i\setminus e|,  \mbox{ for any $q+1$ distinct $e,e'_1,\ldots,e'_q\in E$}\}$.
For any  hyperedge $e\in E$ and any $q$ distinct hyperedges $e'_1,\ldots,e'_q\in E\setminus\{e\}$, we denote
by 
 $S_{e,(e'_1,\ldots,e'_q)}$  the set of $d+\chi$ integers in $[n]$ such that $d$ of these integers are the elements of $e$
 whereas the remaining are the $\chi$ smallest elements in $\bigcup_{i=1}^q e'_i\setminus e$.
  In order to properly define the combinatorial structures related to our group testing problem, we extend the definition of   $S_{e,(e'_1,\ldots,e'_q)}$ to  arbitrary hypergraphs without assuming any lower bound on the 
 size of the unions $\bigcup_{i=1}^q e'_i\setminus e$. To this aim,  we augment the set of vertices $[n]$ with $\chi$ dummy vertices $i_{n+1},\ldots i_{n+\chi}$, with $\chi$ this time being any positive integer smaller than or equal to $n-d$, and define
 $S_{e,(e'_1,\ldots,e'_q),\chi}$ as follows:  $S_{e,(e'_1,\ldots,e'_q),\chi}$ is defined exactly as  
$S_{e,(e'_1,\ldots,e'_q)}$ if $|\bigcup_{i=1}^q e'_i\setminus e|\geq \chi$; otherwise it is defined as the subset of $d+\chi$ integers in $[n]$ such that $d$ of these integers are the  vertices of $e$ and the remaining are the integers in 
$\left(\bigcup_{i=1}^q e'_i\setminus e\right) \cup\{ i_{n+1},\ldots,i_{n+\chi-|\bigcup_{i=1}^q e'_i\setminus e|}\}$. 
 
 For $q=1$, we will denote  $S_{e,(e'_1,\ldots,e'_q)}$ and $S_{e,(e'_1,\ldots,e'_q),\chi}$ with $S_{e,e'}$ and $S_{e,e',\chi}$, respectively. Notice that given two hyperedges $e,e'\in E$, the set $S_{e,e',\chi}$ contains  one or more dummy vertices if and only if $|e'\setminus e|<\chi$.
 
Notice that there are at most $|E|{|E|-1\choose q}$  distinct sets $S_{e,(e'_1,\ldots,e'_q)}$  since for a given $(q+1)$-tuple of distinct hyperedges $(e,e'_1,\ldots,e'_{q+1})$ 
 any permutation of $e'_1,\ldots,e'_{q+1}$ does not affect $S_{e,(e'_1,\ldots,e'_q)}$. Moreover, for any positive $\chi$, this estimate holds also for  the number of distinct sets $S_{e,(e'_1,\ldots,e'_q),\chi}$ since the indices of the dummy vertices that are eventually added to $S_{e,(e'_1,\ldots,e'_q)}$, in order to obtain $S_{e,(e'_1,\ldots,e'_q,\chi)}$, are fixed once the hyperedges $e,e'_1,\ldots,e'_q$ have been chosen. More precisely, they are the smallest $\chi -|\bigcup_{i=1}^q e'_i\setminus e|$ integers in   $\{i_{n+1},\ldots i_{n+\chi}\}$.

We remark that in  our group testing problem, given an input hypergraph $\cF=([n],E)$, every vertex of $[n]$ is contained in at least one
 hyperedge of $E$. If otherwise, one could remove the vertex from the hypergraph without changing the collection of potentially defective hyperedges.
 As a consequence, for a given hypergraph  $\cF=([n],E)$  we need only to specify its  set of hyperedges $E$ to characterize  both the input of the problems and the related combinatorial tools. 
 
 \section{Non-adaptive  group testing for general hypergraphs}\label{sec:ndgt}
In this section, we illustrate the correspondence between non-adaptive group testing algorithms for  an input set of size $n$ and  families of $n$ subsets. Indeed, given a family  $\FF=\{F_1,\ldots,  F_n\}$ with $F_i\subseteq [t]$, we design a non-adaptive group testing strategy as follows.  We denote the elements in the input set by the integers in $[n]=\{1,\ldots, n\}$, and for $i=1,\ldots, t$, we define the group $T_i=\{j\,:\, i\in F_j\}$. Obviously,     $T_1,\ldots,T_t$ can be tested in parallel and therefore  the resulting algorithm is non-adaptive. Conversely, given a non-adaptive group testing strategy for an input set of size $n$ that tests $T_1,\ldots,T_t$, we define a family $\FF=\{F_1,\ldots,F_n\}$  by setting $F_j=\{i\in [t]\,:\, j\in T_i\}$, for $j=1,\ldots, n$. 
Alternatively, any non-adaptive group testing algorithm for an input set of size $n$ that performs $t$ tests can be represented by a binary code of size $n$ with each codeword being a binary vector of length $t$. This is due to the fact that any family of size $n$ on the ground set $[t]$ is associated with the binary code of length $t$ whose codewords are the characteristic vectors of the members of the family. Given such a binary code  $\C=\{\cc_1,\ldots,\cc_n\}$, one has that  $j$ belongs to  pool $T_i$ if and only if the $i$-th entry $\cc_j(i)$ of $\cc_j$ is equal to 1. Such a code can be represented by a $t\times n$ binary matrix $M$ such that 
$M[i,j]=1$ if and only if element $j$ belongs to $T_i$.
 We represent the  responses to tests on $T_1,\ldots,T_t$ by a binary vector whose $i$-th entry is equal to 1 if and only if $T_i$ tests  positive. We call this vector the {\em response vector}. For any input set of hyperedges  $E$ on $[n]$, the response vector is the bitwise $OR$ of  the columns associated with the vertices of the defective hyperedge $e\in E$. It follows that 
  a non-adaptive group testing strategy successfully detects  the defective hyperedge in $E$ if and only if for any two distinct hyperedges $e,e'\in  E$ we obtain two different response vectors. In terms of the associated binary matrix $M$, this means that the bitwise $OR$ of the columns with indices in $e$ and the $OR$ of the columns with indices in $e'$ are distinct. As for the two-stage algorithm, 
  the non-adaptive algorithm used in the first stage should guarantee to ``separate" at least a certain number of non-defective hyperedges from the defective one.

 \section{Combinatorial structures for group testing in arbitrary hypergraphs}
The following definition provides a combinatorial tool which is essentially equivalent to a non-adaptive group testing algorithm in our search model.
\begin{definition}\label{def:separ}
Given a hypergraph $\cF=(V,E)$ with $V=[n]$ with hyperedges of size at most $d$, we say 
that a $t\times n$ matrix $M$ with entries in $\{0,1\}$ is an  $E$-separable code if for any two distinct hyperedges $e,e'\in E$,  
it holds that $\bigvee_{j\in e} c_j\neq \bigvee_{j\in e'} c_j$, where $c_j$ denotes the column of $M$ with index $j$.
The integer $t$ is the length of the $E$-separable code.
 \end{definition}
Having in mind the correspondence between binary codes and non-adaptive algorithms illustrated in Section \ref{sec:ndgt}, one can see that a non-adaptive algorithm successfully determines the contaminated hyperedge in $E$ if and only if the binary code associated to the algorithm is $E$-separable.
 Therefore, the minimum number of tests of such an algorithm coincides with the minimum length of an $E$-separable code.
 Our existential results for $E$-separable codes are in fact based on existential results  for the  combinatorial structure defined below.
 \begin{definition}\label{def:sel}
Given a $d$-uniform hypergraph $\cF=(V,E)$ with $V=[n]$ and two integers $p$ and $m$ with 
$1\leq p\leq n-d$  and  $1\leq m\leq d+p$, we say 
that a $t\times n$ matrix $M$ with entries in $\{0,1\}$ is an  $(E,p,m)$-selector of size $t$ if for any two distinct hyperedges $e,e'\in E$,  the
submatrix of $M$
consisting of the columns with indices in $S_{e,e',p}$  contains at least m distinct
rows of the identity matrix $I_{d+p}$. 
 \end{definition}
 We will see in Section \ref{sec:nonadaptive} that for $m=d+1$ and $p= \min\{ |e'\setminus e|:\,e,e'\in E \mbox{ and } e\neq e'\}$, an $(E,p,m)$-selector is indeed an $E$-separable code.
 \remove{
 Let $E$ denote the input set of hyperedges and let $p= \min\{ |e'\setminus e|:\,e,e'\in E \mbox{ and } e\neq e'\}$.  If we take  $m=d+1$  in the above definition then an  
 $(E,p,d+1)$-selector $M$ is an $E$-separable code. Here,
 we briefly show that this is true for uniform graphs and then, in Theorem \ref{thm:gen_non_adaptgt}, we will  show that this correspondence holds even if we consider not necessarily uniform hypergraphs, i.e., with hyperedges of {\em up to} a certain number  $d$ of vertices.
Let us consider  any two distinct hyperedges $e$ and $e'$ of $E$ and let $M'$ be the submatrix formed by the columns with indices in $S_{e,e',p}$. Notice that with $p= \min\{ |e'\setminus e|:\,e,e'\in E \mbox{ and } e\neq e'\}$ one has that $S_{e,e',p}$ is equal to   $S_{e,e'}$ since it consists only of the indices of the vertices of $e$ and of the smallest $p$ elements in $e'\setminus e$.
 By Definition \ref{def:sel}, $M'$ contains at least $m=d+1$ rows of the identity matrix $I_{d+p}$. 
 One of these rows, say row with index $r$, has all zeros at the intersection with the columns 
 with index in $e$ and 1 at the intersection with a column with index in $e'\setminus e$. Therefore the OR's of the $d$ columns  associated to vertices in $e$ is 
different from that of the up to $d$ columns  associated to vertices in $e'$. If $e$ is the defective hyperedge  the entry with index $r$ in the response vector is 0 whereas at least one element (vertex) of $e'\setminus e$ is in the pool $T_r$ associated to the row of $M$ with index $r$, thus showing that $e'$ cannot be the defective item. By inspecting all columns of the $(E,p,d+1)$-selector, one can identify the unique hyperedge whose vertices are all associated to columns that have no 1-entry in a position where the response vector has a 0-entry.
}

The following definition generalizes Definition \ref{def:sel}  and is at the basis of our two-stage algorithm.
 \begin{definition}\label{def:selq}
Given a $d$-uniform hypergraph $\cF=(V,E)$ with $V=[n]$  and integers $q$, $m$ and $\chi$ 
such that $1\leq q\leq |E|-1$,  , and $1\leq m\leq \chi  +d\leq n$,
we say that a $t\times n$ matrix $M$ with entries in $\{0,1\}$ is an  $(E,q,m,\chi)$-selector of size $t$ if, for any $e\in E$ and any other $q$ distinct hyperedges
$e'_1,\ldots,e'_q\in E$, the
submatrix of $M$
consisting of the columns with indices in $S_{e,(e'_1,\ldots,e'_q),\chi}$  contains at least m distinct
rows of the identity matrix $I_{d+\chi}$.
 \end{definition}
For $q=1$ and $\chi=p$, an $(E,q,m,\chi)$-selector is indeed an $(E,p,m)$-selector.  

We remark that even if the combinatorial objects in the present section are defined for uniform hypergraphs, the algorithms we build upon them work also for non-uniform hypergraphs.

 \section{Upper bound on the size of $(E, q,m,\chi)$-selectors }
Let $E$ be a set of hyperedges, each consisting of  $d$ vertices in $[n]$ and, for a given $q$, $1\leq q\leq |E|-1$, let 
 $\chi$ and $m$ be  positive integers with $\chi \leq n-d$ 
and $m\leq d+\chi$.  We will show how to define a hypergraph ${\cal H}=(X,\cB)$, with $X$ being a set of vectors of length $n$, 
in such a way that  the vectors of any {\em cover} of ${\cal H}=(X,\cB)$ are the rows of an $(E, q,m,\chi)$-selector.
We recall that,  given a hypergraph ${\cal H}=(X,\cB)$, a  {\em cover} of $\cal H$ is a subset
$T\subseteq X$ such that for any hyperedge $B\in \cB$ we have
$T\cap B\neq \emptyset$. 
 In order to avoid confusion with the hyperedges of the input hypergraph $E$, here we denote the hyperedges of the above said hypergraph  ${\cal H}=(X,\cB)$ by using the letter $B$.
The following upper bound on the minimum size $\tau(\HC)$ of a cover of a hypergraph  ${\cal H}=(X,\cB)$ is due to  Lov\'asz \cite{lovasz}:
\begin{equation}\label{eq:lovasz}
\tau(\HC)< {|X|\over \min_{B\in  \cB} |B|} (1+\ln\Delta),
\end{equation}
where $\Delta = \max_{x\in X} |\{ \mbox{$B$: $B\in \cB$ and $x\in
B$}\}|$.

In the following, for $e,e'_1,\ldots,e'_q \in E$, we write 
$(i_1,\ldots,i_{d+\chi})=S_{e,(e'_1,\ldots,e'_q),\chi}$ to refer to the tuple of the elements in $S_{e,(e'_1,\ldots,e'_q),\chi}$ arranged in increasing order.
We will show that  $(E,q,m,\chi)$-selectors
are covers of the 
hypergraph 
defined below.
Let $X$ be the set of all binary
vectors $\x=  (x_1,\ldots,x_n)$ of length $n$ con\-taining $\lfloor\frac n{d+\chi}\rfloor$ entries equal to $1$.
For any integer $i$, $1\leq i\leq d+\chi$, let us denote by $\au_i$ the binary
vector of length $d+\chi$ having all components equal to zero but that
in position $i$,  that is,
$\au_1=(1,0,\ldots,0)$, $\au_2=(0,1,\ldots,0)$, \ldots,
$\au_{d+\chi}=(0,0,\ldots,1)$. For any $(q+1)$-tuple $(e,e'_1,\ldots,e'_q)$ of $q+1$ distinct hyperedges in $E$ and, for any binary vector $\au=(a_1,\ldots,a_{d+\chi})\in
\{\au_1,\ldots,\au_{d+\chi}\}$, let us define the set of binary vectors
$B_{\au,{e,(e'_1,\ldots,e'_q)} }=\{\x=(x_1, \ldots , x_n)\in X \,:\,  x_{i_1}=a_1,\ldots, x_{i_{d+\chi}}=a_{d+\chi}, \mbox{ with } (i_1,\ldots,i_{d+\chi})=S_{e,(e'_1,\ldots,e'_q),\chi}\}$.
 For
any set $A\subseteq\{\au_{1},\ldots, \au_{d+\chi}\}$ of size $s$,
$1\leq s\leq d+\chi$,  and any set $S_{e,(e'_1,\ldots,e'_q),\chi}$, for $e,e'_1,\ldots,e'_q \in E$, 
let us define $B_{A,e,(e'_1,\ldots,e'_q)}=\bigcup_{\au\in A} B_{\au,{e,(e'_1,\ldots,e'_q)} }$. For any $s=1,\ldots,d+\chi$, we
define $\cB_s=\{B_{A,{e,(e'_1,\ldots,e'_q)}}\,:\, A \subset\{\au_{1},\ldots, \au_{d+\chi} \},\, |A|=s,  \mbox{ and
  } e,e'_1,\ldots,e'_q \in E \}$ and the hypergraph
 $\HC_s=(X,\cB_s)$.
 We claim that {\em any} cover $T$ of $\HC_{d+\chi-m+1}$ is an $(E,q,m,\chi)$-selector, that is, for any $e,e'_1,\ldots,e'_q \in E$, the submatrix
 of $d+\chi$ columns of $T$ with indices in $S_{e,(e'_1,\ldots,e'_q),\chi}$
contains at least $m$ distinct rows of the identity matrix $I_{d+\chi}$.
The proof is by contradiction. Assume that there exist $q+1$ hyperedges $e,e'_1,\ldots,e'_q\in E$ such that the submatrix  of  the columns of $T$  with indices in the $(d+\chi)$-tuple $(i_1,\ldots,i_{d+\chi})=S_{e,(e'_1,\ldots,e'_q),\chi}$ contains {\em at most}
$m-1$ distinct rows of $I_{d+\chi}$. Let such  rows be
$\au_{j_1}, \ldots , \au_{j_g}$, with $g\leq m-1$, and
let $A$ be {\em any} subset of $\{\au_1, \ldots , \au_{d+\chi}\}\setminus \{\au_{j_1}, \ldots , \au_{j_g}\}$
of cardina\-lity $|A|=d+\chi-m+1$. By definition of  $\HC_{d+\chi-m+1}$,  the  hyperedge  $B_{A, e,(e'_1,\ldots,e'_q)}$  of $\HC_{d+\chi-m+1}$ is such that $T\cap B_{A,e,(e'_1,\ldots,e'_q)}=\emptyset$, thus contradicting the fact that $T$ is a cover for
$\HC_{d+\chi-m+1}$.

\smallskip
To prove the following theorem, we 
exploit Lov\'asz's result (\ref{eq:lovasz})
to derive an upper bound on the minimum size of a cover of the hypergraph $\HC_{d+\chi-m+1}$.

\begin{theorem}\label{thm:lovasz_sel}
Let $E$ be a set of hyperedges of size $d$ with vertices in $[n]$. Moreover, let $q$, $m$ and $\chi$ be integers 
such that    $1\leq q\leq |E|-1$  and $1\leq m\leq \chi +d\leq n$.
There exists an  
$(E,q,m,\chi)$-selector of size $t$ with 
\begin{equation}\label{upperselector}
t<\frac {2e(d+\chi)}{d+\chi-m+1} \left(1+ \ln\left({d+\chi-1\choose d+\chi-m}\alpha\right)\right),
\end{equation}
where $\alpha=\min\left\{e^q|E|\left({|E|-1\over q}\right)^q, e^{d+\chi-1} \left({n\over d+\chi-1}\right)^{d+\chi}\right\}$ and $e=2.7182...$ is the base of the natural logarithm.
\end{theorem}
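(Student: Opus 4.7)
The plan is to apply Lov\'asz's covering inequality (\ref{eq:lovasz}) to the hypergraph $\HC_{d+\chi-m+1}=(X,\cB_{d+\chi-m+1})$ constructed just above the statement. I have already argued that every cover of $\HC_{d+\chi-m+1}$ is an $(E,q,m,\chi)$-selector, so any upper bound on $\tau(\HC_{d+\chi-m+1})$ is also an upper bound on the minimum size $t$. The proof then reduces to evaluating (or estimating) the three quantities that appear in (\ref{eq:lovasz}): $|X|$, $\min_{B\in \cB_{d+\chi-m+1}}|B|$, and $\Delta$.

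By construction $|X|=\binom{n}{\lfloor n/(d+\chi)\rfloor}$. Each hyperedge $B_{A,e,(e'_1,\ldots,e'_q)}$ has cardinality $|A|\binom{n-(d+\chi)}{\lfloor n/(d+\chi)\rfloor-1}$, because the restriction of $\x$ to the $d+\chi$ coordinates of $S_{e,(e'_1,\ldots,e'_q),\chi}$ must equal one of the $|A|$ allowed patterns $\au_j$ (each carrying a single $1$), after which the remaining $n-(d+\chi)$ coordinates may carry the other $\lfloor n/(d+\chi)\rfloor-1$ ones freely. With $|A|=d+\chi-m+1$, the minimum is $(d+\chi-m+1)\binom{n-(d+\chi)}{\lfloor n/(d+\chi)\rfloor-1}$. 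A routine manipulation of the falling factorials in the ratio $\binom{n}{k}/\binom{n-(d+\chi)}{k-1}$, with $k=\lfloor n/(d+\chi)\rfloor$, together with $(1+1/(d+\chi-1))^{d+\chi-1}\le e$ and a constant factor absorbing the floor, gives $|X|/\min_B|B|\le 2e(d+\chi)/(d+\chi-m+1)$.

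For $\Delta$, fix an arbitrary $\x\in X$. A hyperedge $B_{A,e,(e'_1,\ldots,e'_q)}$ contains $\x$ precisely when the restriction of $\x$ to $S_{e,(e'_1,\ldots,e'_q),\chi}$ equals some $\au_j$ \emph{and} $\au_j\in A$; for each such admissible $(q+1)$-tuple the number of sets $A$ of size $d+\chi-m+1$ containing the pinned $\au_j$ is exactly $\binom{d+\chi-1}{d+\chi-m}$. Hence $\Delta$ is at most $\binom{d+\chi-1}{d+\chi-m}$ times the number of admissible $(q+1)$-tuples, and the latter will be estimated in two independent ways: trivially by $|E|\binom{|E|-1}{q}\le e^q|E|((|E|-1)/q)^q$, and structurally by the number of $(d+\chi)$-subsets of $[n]$ that meet the support of $\x$ in exactly one coordinate, namely $\lfloor n/(d+\chi)\rfloor\binom{n-\lfloor n/(d+\chi)\rfloor}{d+\chi-1}$, which via $\binom{N}{k}\le(eN/k)^k$ is bounded by $e^{d+\chi-1}(n/(d+\chi-1))^{d+\chi}$. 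Taking the minimum yields $\Delta\le\binom{d+\chi-1}{d+\chi-m}\alpha$, and plugging all three estimates into (\ref{eq:lovasz}) produces (\ref{upperselector}).

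I expect the main technical nuisance to be the bookkeeping behind the second bound on $\Delta$: different $(q+1)$-tuples may produce the \emph{same} set $S_{e,(e'_1,\ldots,e'_q),\chi}$, and one must argue that it is enough to count distinct $(d+\chi)$-subsets of $[n]$ (as the hyperedge $B_{A,e,(e'_1,\ldots,e'_q)}$ depends on the tuple only through this set), which is what frees the bound from the factor $|E|$ and gives the second argument of $\alpha$. The other fiddly step is the floor arithmetic in $\lfloor n/(d+\chi)\rfloor$ when simplifying $|X|/\min_B|B|$; the slack introduced there is absorbed into the constant $2$ on the right-hand side.
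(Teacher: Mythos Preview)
Your proposal is correct and follows essentially the same route as the paper's own proof: apply Lov\'asz's bound to $\HC_{d+\chi-m+1}$, compute $|X|$ and $\min_B|B|$ exactly as you describe, bound the ratio by $2e(d+\chi)/(d+\chi-m+1)$ via the falling-factorial manipulation (the paper splits into the cases $d+\chi>n/2$ and $d+\chi\le n/2$ to handle the floor, which is the ``constant factor'' you allude to), and bound $\Delta$ by $\binom{d+\chi-1}{d+\chi-m}$ times the minimum of the two counts you give, simplified via $\binom{a}{b}\le(ea/b)^b$. The paper's argument and yours are step-for-step the same.
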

\begin{proof} 
Let $\HC_{d+\chi-m+1}=(X,\cB_{d+\chi-m+1})$ be the hypergraph defined in the discussion preceding the statement of the theorem. Inequality   (\ref{eq:lovasz})  implies that 
\begin{equation}\label{eq:lovasz2}
\tau(\HC)< {|X|\over  \min \{|B|\; : B\in \cB_{d+\chi-m+1}\}} (1+\ln\Delta).
\end{equation}
In order to derive an upper bound on $t$, we estimate the quantities that appear on the right hand side of  (\ref{eq:lovasz2}), that is,
we evaluate the quantities
$$|X|, \ \ \ \min \{|B|\; : B\in \cB_{d+\chi-m+1}\}, \ \ \ \hbox{ and } \ \Delta,$$ for the hypergraph
$\HC_{d+\chi-m+1}$. 
Since $X$ is the set of all binary
vectors  of length $n$ containing $\lfloor\frac n{d+\chi}\rfloor$ entries equal to $1$,
 it holds that $|X|= {{n} \choose{  \lfloor\frac n{d+\chi}\rfloor}}$.
Each $B_{\au, e,(e'_1,\ldots,e'_q)}$ has size ${n-d-\chi\choose \lfloor\frac n{d+\chi}\rfloor-1}$ since this is the number of vectors, with $\lfloor\frac n{d+\chi}\rfloor$ entries equal to $1$, for which one has  that the entries with indices in $(i_1,\ldots,i_{d+\chi})=S_{e,(e'_1,\ldots,e'_q),\chi}$  form the vector $\au$.  
Moreover, each hyperedge $B_{A,e,(e'_1,\ldots,e'_q)}$ of $\HC_{d+\chi-m+1}$ is the union of
$d+\chi-m+1$ disjoint sets $B_{\au, e,(e'_1,\ldots,e'_q)}$, and therefore it has cardinality
$$|B_{A,e,(e'_1,\ldots,e'_q)}|= (d+\chi-m+1)\cdot |B_{\au,e,(e'_1,\ldots,e'_q)}|=(d+\chi-m+1){n-d-\chi\choose \lfloor\frac n{d+\chi}\rfloor-1},$$
and consequently, one has that 
$$\frac{ |X|} {\min \{|B|\; : B\in \cB_{d+\chi-m+1}\}}= {{n\choose \lfloor\frac n{d+\chi}\rfloor}\over  {(d+\chi-m+1){n-d-\chi\choose \lfloor\frac n{d+\chi}\rfloor-1}}}.$$

Let us prove an upper bound 
on ${{n\choose \lfloor\frac n{d+\chi}\rfloor}\over  {(d+\chi-m+1){n-d-\chi\choose \lfloor\frac n{d+\chi}\rfloor-1}}}$.
For $d+\chi>n/2$, it is
 ${{n\choose \lfloor\frac n{d+\chi}\rfloor}\over  {n-d-\chi\choose \lfloor\frac n{d+\chi}\rfloor-1}}=n< 2(d+\chi)$, whereas for $d+\chi\leq n/2$ it is
\begin{eqnarray*}
{{n\choose \lfloor\frac n{d+\chi}\rfloor}\over  {n-d-\chi\choose \lfloor\frac n{d+\chi}\rfloor-1}}&=&
\frac n{\lfloor\frac n{d+\chi}\rfloor}\cdot\frac{n-1}{n-\lfloor\frac n{d+\chi}\rfloor}\cdot \frac{n-2}{n-\lfloor\frac n{d+\chi}\rfloor-1}\cdot\ldots\cdot\frac{n-d-\chi+1}{n-d-\chi-\lfloor\frac n{d+\chi}\rfloor+2}\cr   
& &\hphantom{aa}\cr
&\leq&
(d+\chi)\cdot\frac n{n-d-\chi}\cdot\frac{n-1}{n-\lfloor\frac n{d+\chi}\rfloor}\cdot \frac{n-2}{n-\lfloor\frac n{d+\chi}\rfloor-1}\cdot\ldots\cdot\frac{n-d-\chi+1}{n-d-\chi-\lfloor\frac n{d+\chi}\rfloor+2}\cr   
& &\hphantom{aa}\cr
&\leq&
2(d+\chi)\cdot \frac{n-1}{n-\lfloor\frac n{d+\chi}\rfloor}\cdot \frac{n-2}{n-\lfloor\frac n{d+\chi}\rfloor-1}\cdot\ldots\cdot\frac{n-d-\chi+1}{n-d-\chi-\lfloor\frac n{d+\chi}\rfloor+2}\cr
&\leq& 2(d+\chi) \left(\frac{n-d-\chi+1}{n-d-\chi-\lfloor\frac n{d+\chi}\rfloor+2}\right)^{d+\chi-1}\cr
& &\hphantom{aa}\cr
&\leq& 2(d+\chi) \left(\frac{n-d-\chi+1}{n-d-\chi-\frac n{d+\chi}+2}\right)^{d+\chi-1}\cr
& &\hphantom{aa}\cr
&=&2(d+\chi)\left({{(d+\chi)(n-d-\chi+1)}\over{(d+\chi)( n-d-\chi+1)-(n-d-\chi)}}\right)^{d+\chi-1}\cr
& &\hphantom{aa}\cr
&=&2(d+\chi)\left(1+ {{n-d-\chi}\over {(d+\chi)(n-d-\chi+1)-(n-d-\chi)}}\right)^{d+\chi-1}\cr
& &\hphantom{aa}\cr
&\leq&2( d+\chi )\left(1+ {1\over{d+\chi-1}}\right)^{d+\chi-1}\cr
& \hphantom{aa}\cr
 &\leq& 2e(d+\chi).
 \end{eqnarray*}
 Therefore, we have that
\begin{equation}\label{eq:ratio}
\frac{ |X|} {\min \{|B|\; : B\in \cB_{d+\chi-m+1}\}}\leq\frac {2e(d+\chi)}{d+\chi-m+1}.
\end{equation}

In order to compute $\Delta$, we notice that each $\x\in X$ belongs to at most
\begin{equation}\label{eq:fordelta}
\min\left\{|E|{|E|-1\choose q},{ \lfloor\frac n{d+\chi}\rfloor \choose 1}{{n-\lfloor\frac n{d+\chi}\rfloor}\choose {d+\chi-1}}\right\}
\end{equation}
distinct sets $B_{\au, e,(e'_1,\ldots,e'_q)}$.
Indeed, as observed in Section \ref{sec:notation}, the number of distinct sets $S_{e,(e'_1,\ldots,e'_q),\chi}$ is at most $|E|{|E|-1\choose q}$.  Notice that, for a vector $\au=(a_1,\ldots,a_{d+\chi})$, the actual number of $(d+\chi)$-tuples $S_{e,(e'_1,\ldots,e'_q),\chi}=(i_1,\ldots,i_{d+\chi})$ for which 
it holds that  $x_{i_1}=a_1,\ldots, x_{i_{d+\chi}}=a_{d+\chi}$ might be significantly smaller than the total number of $(d+\chi)$-tuples $S_{e,(e'_1,\ldots,e'_q),\chi}$, but this estimate is sufficient to obtain the first value  in the 
``min'' in the expression of $\alpha$ that appears in the statement of the theorem. 
However, if $|E|$ is close to ${n\choose d}$, i.e., it contains almost all hyperedges of size $d$ on $[n]$,
then it might be  convenient to upper bound the number of distinct hyperedges $B_{\au, e,(e'_1,\ldots,e'_q)}$ containing a given 
vector $\x\in X$ by 
${ \lfloor\frac n{d+\chi}\rfloor \choose 1}{{n-\lfloor\frac n{d+\chi}\rfloor}\choose {d+\chi-1}}$ which is obtained by considering all possible ways of choosing $(d+\chi)$ positions in $\x$ so that exactly one of those positions corresponds to a 1-entry of $\x$, whereas the others correspond to 0-entries. 

Now observe that 
each $B_{\au, e,(e'_1,\ldots,e'_q)}$ belongs to ${{d+\chi-1}\choose {d+\chi-m}}$ distinct hyperedges
$B_{A,e,(e'_1,\ldots,e'_q)}$. Therefore, (\ref{eq:fordelta}) implies that the maximum degree of vertices in  $\HC_{d+\chi-m+1}$ is

\begin{equation}\label{eq:delta1}
\Delta\leq {d+\chi-1\choose d+\chi-m}\min\left\{|E|{|E|-1\choose q},{ \lfloor\frac n{d+\chi}\rfloor \choose 1}{{n-\lfloor\frac n{d+\chi}\rfloor}\choose {d+\chi-1}}\right\}.
\end{equation}

In order to estimate our upper bound, we  resort to the following well  known inequality:
 \begin{equation}\label{eq:binom}
 {a\choose b}\leq (ea/b)^b.
 \end{equation}
By exploiting (\ref{eq:binom}), we obtain the following upper bound of the second expression in the min of (\ref{eq:delta1}):
\begin{eqnarray*}
{ \lfloor\frac n{d+\chi}\rfloor  \choose 1} {n- \lfloor\frac n{d+\chi}\rfloor  \choose d+\chi-1}
 &\le&
 \left\lfloor\frac n{d+\chi}\right\rfloor  \left({n- \lfloor\frac n{d+\chi}\rfloor )\over d+\chi-1}\right)^{d+\chi-1}e^{d+\chi-1}\cr\cr
  &\leq & e^{d+\chi-1}\left({n\over d+\chi-1}\right)^{d+\chi}.
\end{eqnarray*}
Applying inequality  (\ref{eq:binom}) to ${|E|-1\choose q}$ in (\ref{eq:delta1}), we get
\remove{
Applying inequality  (\ref{eq:binom}) to ${|E|-1\choose q}$ and to ${{n-\lfloor\frac n{d+\chi}\rfloor}\choose {d+\chi-1}}$ in the upper bound (\ref{eq:delta1}) on $\Delta$, we get that }
\begin{equation}\label{eq:delta3}
\!\!\!\!\!\!\Delta\leq {d+\chi-1\choose d+\chi-m}\min\left\{e^q|E|\left({|E|-1\over q}\right)^q, e^{d+\chi-1} \left({n\over d+\chi-1}\right)^{d+\chi}\right\}
\end{equation}
\remove{Now, let us limit from above $\frac{|X| } { \min \{|B|\; : B\in \cB_{d+\chi-m+1}\}}= {{n\choose {n\over d+\chi}}\over  {(d+\chi-m+1){n-d-\chi\choose \lfloor\frac n{d+\chi}\rfloor-1}}}$.
For $d+\chi\in\{1,2\}$, it is
 ${{n\choose \lfloor\frac n{d+\chi}\rfloor}\over  {n-d-\chi\choose \lfloor\frac n{d+\chi}\rfloor-1}}< 2(d+\chi)$, whereas for $d+\chi\geq 3$ it is
\begin{eqnarray*}
{{n\choose \lfloor\frac n{d+\chi}\rfloor}\over  {n-d-\chi\choose \lfloor\frac n{d+\chi}\rfloor-1}}&=&
(d+\chi) \frac{n-1}{n-\lfloor\frac n{d+\chi}\rfloor}\cdot \frac{n-2}{n-\lfloor\frac n{d+\chi}\rfloor-1}\cdot\ldots\cdot\frac{n-d-\chi+1}{n-d-\chi-\lfloor\frac n{d+\chi}\rfloor+2}\cr
&\leq& (d+\chi) \left(\frac{n-d-\chi+1}{n-d-\chi-\lfloor\frac n{d+\chi}\rfloor+2}\right)^{d+\chi-1}\cr
& &\hphantom{aa}\cr
&=&(d+\chi)\left({{(d+\chi)(n-d-\chi+1)}\over{(d+\chi)( n-d-\chi+1)-(n-d-\chi)}}\right)^{d+\chi-1}\cr
& &\hphantom{aa}\cr
&=&(d+\chi)\left(1+ {{n-d-\chi}\over {(d+\chi)(n-d-\chi+1)-(n-d-\chi)}}\right)^{d+\chi-1}\cr
& &\hphantom{aa}\cr
&\leq&( d+\chi )\left(1+ {1\over{d+\chi-1}}\right)^{d+\chi-1}
 <   e(d+\chi).
\end{eqnarray*}
The above inequalities imply}
By plugging into (\ref{eq:lovasz2}) the upper bound (\ref{eq:delta3}) on $\Delta$ and the upper bound (\ref{eq:ratio}) on $\frac{ |X|} {\min \{|B|\; : B\in \cB_{d+\chi-m+1}\}}$, we obtain
the stated upper bound. 
\end{proof}
\remove{

Let $\HC_{d+\chi-m+1}=(X,\cB_{d+\chi-m+1})$ be the hypergraph defined in the discussion preceding the present  theorem. Inequality   (\ref{eq:lovasz}) implies
\begin{equation}\label{eq:lovasz2}
\tau(\HC)< {|X|\over  \min \{|B|\; : B\in \cB_{d+\chi-m+1}\}} (1+\ln\Delta),
\end{equation}
In order to derive an upper bound $t$, we estimate the quantities that appear on the right hand side of  (\ref{eq:lovasz2}), that is,
we evaluate the quantities
$$|X|, \ \ \ \min \{|B|\; : B\in \cB_{d+\chi-m+1}\}, \ \ \ \hbox{ and } \ \Delta,$$ for the hypergraph
$\HC_{d+\chi-m+1}$. 

Since $X$ is the set of all binary
vectors  of length $n$ containing $\frac n{d+\chi}$ entries equal to $1$, it holds that $|X|= {{n} \choose{ {\frac {n} { d+\chi}}}}$.
Each $B_{\au, e,(e'_1,\ldots,e'_q)}$ has size ${n-d-\chi\choose \lfloor\frac n{d+\chi}\rfloor-1}$ since this is the number of vectors, with $\frac n{d+\chi}$ entries equal to $1$, for which one has  that the entries with indices in $S_{e,(e'_1,\ldots,e'_q),\chi}$ form the vector $\au$.  
Moreover, each hyperedge $B_{A,e,(e'_1,\ldots,e'_q)}$ of $\HC_{d+\chi-m+1}$ is the union of
$d+\chi-m+1$ disjoint sets $B_{\au, e,(e'_1,\ldots,e'_q)}$, and therefore it has cardinality
$$|B_{A,e,(e'_1,\ldots,e'_q)}|\leq (d+\chi-m+1)\cdot |B_{\au,e,(e'_1,\ldots,e'_q)}|=(d+\chi-m+1){n-d-\chi\choose \lfloor\frac n{d+\chi}\rfloor-1}.$$
To compute $\Delta$, we notice that each $\x\in X$ belongs to at most
\begin{equation}\label{eq:fordelta}
\min\left\{|E|{|E|-1\choose q},{ \frac n {d+\chi } \choose 1}{{n-\lfloor\frac n{d+\chi}\rfloor}\choose {d+\chi-1}}\right\}
\end{equation}
distinct sets $B_{\au, e,(e'_1,\ldots,e'_q)}$.
Indeed, as observed in Section \ref{sec:notation}, the number of distinct sets $S_{e,(e'_1,\ldots,e'_q),\chi}$ is at most $|E|{|E|-1\choose q}$.  Notice that, for a vector $\au=(a_1,\ldots,a_{d+\chi})$, the actual number of $(d+\chi)$-tuples $S_{e,(e'_1,\ldots,e'_q),\chi}=(i_1,\ldots,i_{d+\chi})$ for which 
it holds that  $x_{i_1}=a_1,\ldots, x_{i_{d+\chi}}=a_{d+\chi}$ might be significantly smaller than the total number of $(d+\chi)$-tuples $S_{e,(e'_1,\ldots,e'_q),\chi}$, but this estimate is sufficient to obtain the first value  in the 
``min'' in the expression of $\alpha$ that appears in the statement of the theorem. 
However, if $|E|$ is close to ${n\choose d}$, i.e., it contains almost all hyperedges of size $d$ on $[n]$,
then it might be  convenient to upper bound the number of distinct hyperedges $B_{\au, e,(e'_1,\ldots,e'_q)}$ containing a given 
vector $\x\in X$ by 
${ \frac n {d+\chi} \choose 1}{{n-\lfloor\frac n{d+\chi}\rfloor}\choose {d+\chi-1}}$ which is obtained by considering all possible ways of choosing $(d+\chi)$ positions in $\x$ so that exactly one of those positions corresponds to a 1-entry of $\x$, whereas the others correspond to 0-entries. 

Now observe that 
each $B_{\au, e,(e'_1,\ldots,e'_q)}$ belongs to ${{d+\chi-1}\choose {d+\chi-m}}$ distinct hyperedges
$B_{A,e,(e'_1,\ldots,e'_q)}$. Therefore, (\ref{eq:fordelta}) implies that the maximum degree of vertices in  $\HC_{d+\chi-m+1}$ is

\begin{equation}\label{eq:delta1}
\Delta\leq {d+\chi-1\choose d+\chi-m}\min\left\{|E|{|E|-1\choose q},{ \frac n {d+\chi } \choose 1}{{n-\lfloor\frac n{d+\chi}\rfloor}\choose {d+\chi-1}}\right\}.
\end{equation}

In order to estimate our upper bound, we  resort to the following well  known inequality:
 \begin{equation}\label{eq:binom}
 {a\choose b}\leq (ea/b)^b.
 \end{equation}
By exploiting (\ref{eq:binom}), we obtain the following upper bound of the second expression in the min of (\ref{eq:delta1}):
\begin{equation*}
\begin{split}
&\hspace*{-1truecm}{d+\chi-1\choose d+\chi-m}{\lfloor\frac n{d+\chi}\rfloor \choose 1} {n-\lfloor\frac n{d+\chi}\rfloor \choose d+\chi-1}\\
 \le&
\left({d+\chi-1\over d+\chi-m}\right)^{d+\chi-m}e^{d+\chi-m}\lfloor\frac n{d+\chi}\rfloor \left({n-{n\over d+\chi})\over d+\chi-1}\right)^{d+\chi-1}e^{d+\chi-1}\\
 =& e^{2(d+\chi)-m-1}  \left (1+{{m-1} \over  {d+\chi-m}}\right)^{d+\chi-m} ({n\over d+\chi})^{d+\chi}\\
  \leq & e^{2(d+\chi)-m-1} \left(1+{m\over { d-\chi-m}}\right) ^{d+\chi-m} \left({n\over d+\chi}\right)^{d+\chi}\\
\leq  &e^{2(d+\chi)-m-1}e^m \left({n\over d+\chi}\right)^{d+\chi}.
\end{split}
\end{equation*}
From the above inequality and inequality (\ref{eq:delta1}), we get
$$
\Delta\leq {d+\chi-1\choose d+\chi-m}\min\left\{|E|{|E|-1\choose q}, e^{2(d+\chi)-m-1}e^m \left({n\over d+\chi}\right)^{d+\chi}\right\}.$$
Applying inequality  (\ref{eq:binom}) to ${|E|-1\choose q}$ in the above upper bound, we get
\begin{equation}\label{eq:delta3}
\!\!\!\!\!\!\Delta\leq {d+\chi-1\choose d+\chi-m}\min\left\{e^q|E|\left({|E|-1\over q}\right)^q, e^{2(d+\chi)-m-1}e^m \left({n\over d+\chi}\right)^{d+\chi}\right\}.
\end{equation}

Now, let us limit from above $\frac{|X| } { \min \{|B|\; : B\in \cB_{d+\chi-m+1}\}}= {{n\choose {n\over d+\chi}}\over  {(d+\chi-m+1){n-d-\chi\choose \lfloor\frac n{d+\chi}\rfloor-1}}}$.
For $d+\chi\in\{1,2\}$, it is
 ${{n\choose \lfloor\frac n{d+\chi}\rfloor}\over  {n-d-\chi\choose \lfloor\frac n{d+\chi}\rfloor-1}}< 2(d+\chi)$, whereas for $d+\chi\geq 3$ it is
\begin{eqnarray*}
{{n\choose \lfloor\frac n{d+\chi}\rfloor}\over  {n-d-\chi\choose \lfloor\frac n{d+\chi}\rfloor-1}}&=&
(d+\chi) \frac{n-1}{n-\lfloor\frac n{d+\chi}\rfloor}\cdot \frac{n-2}{n-\lfloor\frac n{d+\chi}\rfloor-1}\cdot\ldots\cdot\frac{n-d-\chi+1}{n-d-\chi-\lfloor\frac n{d+\chi}\rfloor+2}\cr
&\leq& (d+\chi) \left(\frac{n-d-\chi+1}{n-d-\chi-\lfloor\frac n{d+\chi}\rfloor+2}\right)^{d+\chi-1}\cr
& &\hphantom{aa}\cr
&=&(d+\chi)\left({{(d+\chi)(n-d-\chi+1)}\over{(d+\chi)( n-d-\chi+1)-(n-d-\chi)}}\right)^{d+\chi-1}\cr
& &\hphantom{aa}\cr
&=&(d+\chi)\left(1+ {{n-d-\chi}\over {(d+\chi)(n-d-\chi+1)-(n-d-\chi)}}\right)^{d+\chi-1}\cr
& &\hphantom{aa}\cr
&\leq&( d+\chi )\left(1+ {1\over{d+\chi-1}}\right)^{d+\chi-1}
 <   e(d+\chi).
\end{eqnarray*}
The above inequalities imply
\begin{equation}\label{eq:ratio}
\frac{ |X|} {\min \{|B|\; : B\in \cB_{d+\chi-m+1}\}}\leq\frac {e(d+\chi)}{d+\chi-m+1}.
\end{equation}
By plugging into (\ref{eq:lovasz2}) the upper bound (\ref{eq:delta3}) on $\Delta$ and the upper bound (\ref{eq:ratio}) on $\frac{ |X|} {\min \{|B|\; : B\in \cB_{d+\chi-m+1}\}}$, we obtain the stated upper bound.

}
\subsection{A non-adaptive group testing algorithm for general hypergraph}\label{sec:nonadaptive}
In order to prove the upper bound on the number of tests of the non-adaptive algorithm, we prove a more general result that can be exploited also to prove existential results for the three-stage algorithms.
\begin{theorem}\label{thm:gen_non_adaptgt}
Let $\cF=(V,E)$  be a hypergraph with $V=[n]$ with all hyperedges in $E$ of size {\em at most} $d$. For any positive integer $p\leq d-1$, there exists a  non-adaptive algorithm that allows to discard all but those hyperedges $e$ such that $|e\setminus e^*|< p$, where $e^*$ is the defective hyperedge, and uses
$t=O\left(\frac d p\log E\right)$ tests.
\end{theorem}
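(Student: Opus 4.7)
The plan is to reduce the non-uniform problem to the $d$-uniform setting handled by Theorem \ref{thm:lovasz_sel}. I would first pad the input hypergraph: for every $e\in E$ with $|e|<d$, introduce a private set $D_e$ of $d-|e|$ fresh dummy vertices (pairwise disjoint across distinct hyperedges), and let $f_e=e\cup D_e$; if $|e|=d$ take $f_e=e$. Set $E'=\{f_e\colon e\in E\}$ on the vertex set $V'=[n]\cup\bigcup_{e\in E} D_e$, a $d$-uniform hypergraph with $|E'|=|E|$. Crucially, I would index $V'$ so that every dummy vertex receives an index strictly larger than $n$; this ordering will ensure that the ``$p$ smallest elements'' prescribed by $S_{\cdot,\cdot,p}$ always prefer the real vertices over the padding ones.

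Next I would apply Theorem \ref{thm:lovasz_sel} to this $d$-uniform hypergraph with $q=1$, $\chi=p$, and $m=d+1$. For $q=1$ the factor $\alpha$ is $O(|E|^2)$, and the theorem yields a matrix $M$ that is an $(E',1,d+1,p)$-selector of size $t=O((d/p)\log|E|)$. The non-adaptive algorithm tests, for each row $r$ of $M$, the pool $T_r=\{v\in[n]\colon M[r,v]=1\}$, simply ignoring the dummy columns (which do not correspond to physical objects); the response vector $y$ satisfies $y_r=1$ iff $e^*\cap T_r\neq\emptyset$, independently of the values $M$ assigns to dummy columns.

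Correctness reduces to a pigeonhole argument. Fix the defective $e^*\in E$ and any other $e\in E$ with $|e\setminus e^*|\ge p$. Since $D_e\cap D_{e^*}=\emptyset$, we have $|f_e\setminus f_{e^*}|\ge|e\setminus e^*|\ge p$, and the dummy-ordering guarantees that $S_{f_{e^*},f_e,p}$ consists of the $d$ columns of $f_{e^*}$ together with $p$ columns drawn from $e\setminus e^*\subseteq[n]$, with no extra vertices from the auxiliary $i_{n+j}$ set. The selector property supplies at least $d+1$ distinct rows of $I_{d+p}$ restricted to these $d+p$ columns; because $f_{e^*}$ occupies only $d$ positions, at least one such identity row $r$ has its unique $1$ in a column $w\in e\setminus e^*$. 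At row $r$ every vertex of $f_{e^*}$ sits in a zero column, so in particular $M[r,v]=0$ for every real $v\in e^*$ and $y_r=0$; yet $w\in e\cap T_r$ would force $y_r=1$ if $e$ were defective. Hence $e$ is eliminated, and only hyperedges with $|e\setminus e^*|<p$ survive.

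The main obstacle I anticipate is coordinating the padding dummies with the operator $S_{\cdot,\cdot,p}$: the $p$ ``extra'' columns must be genuinely physical vertices of $e\setminus e^*$, because otherwise the selector would only separate $f_{e^*}$ from $f_e$ via padding positions that never appear in any real pool. Ordering the dummies after $[n]$, combined with the hypothesis $|e\setminus e^*|\ge p$, is exactly what forces the right $p$ columns to be ``real''. A minor secondary point is verifying that the $\binom{d+p-1}{p-1}$ factor appearing inside the logarithm in Theorem \ref{thm:lovasz_sel} can be absorbed in the regime where $\log|E|$ is the dominant term, leaving the advertised $O((d/p)\log|E|)$ bound.
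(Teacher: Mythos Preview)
Your proposal is correct and essentially the same as the paper's proof: pad to a $d$-uniform hypergraph, invoke Theorem~\ref{thm:lovasz_sel} with $q=1$, $\chi=p$, $m=d+1$, and then use pigeonhole to locate an identity row whose unique $1$ lands on a real vertex of $e\setminus e^*$. The only difference is that you pad each hyperedge with its own private dummy set $D_e$, whereas the paper reuses a single shared block $\{i_{n+1},\dots,i_{n+d-1}\}$ for all hyperedges; this is a cosmetic choice (your version sidesteps the small case analysis needed when shared dummies of $\tilde e'$ also lie in $\tilde e^*$, at the harmless cost of enlarging the ambient vertex set, which does not affect the bound since the first branch of $\alpha$ in Theorem~\ref{thm:lovasz_sel} is independent of $n$).
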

\begin{proof}
Let $E$ denote the input set of hyperedges. First we consider the case  when all hyperedges of $E$  have size exactly $d$ and show that in this case an $(E,p,,d+1)$-selector corresponds to a non-adaptive algorithm that allows to discard all but those hyperedges  $e\in E$ such that $|e\setminus e^*|< p$. Then we will consider the case of  hypergraphs that are not necessarily uniform.
Let $M$ be an $(E,p,d+1)$-selector. By  Definition \ref{def:sel},
for any two distinct hyperedges $e$ and $e'$, the submatrix $M'$ of $M$ formed by the $d+p$ columns with indices in $S_{e,e',p}$ contains at least $d+1$ rows of the identity matrix $I_{d+p}$. 
 At least one of these rows has 0 at the intersection with each of the $d$ columns 
 with index in $e$ and an entry equal to 1 at the intersection with one of the $p$ columns with indices in $S_{e,e',p}\setminus e$. 
 By definition of $S_{e,e',p}$, if $|e'\setminus e|\geq p$ then $S_{e,e',p}\setminus e\subseteq e'\setminus e$,
  and consequently the above said entry equal to 1 is at the intersection with one column with index in $e'\setminus e$. It follows that the OR of the  $d$ columns  associated to vertices in $e$ is 
different from that of the  $d$ columns  associated to vertices in $e'$. 
Now suppose that  $e^*$ be the defective hyperedge and $e'$ be an arbitrary hyperedge such that $|e'\setminus e^*|\geq p$.
From the above argument, there exists a row index, say $r$, such that at least one column associated to a vertex in $e'$ has an entry equal to 1  in position $r$, whereas all columns associated to vertices in $e^*$ have 0 in that position. 
It follows that  the response vector has an entry equal to 0 in position $r$ whereas at least one element of $e'\setminus e^*$ is in the pool associated to the row of $M$ with index $r$, thus showing that $e'$ is not the defective hyperedge. By inspecting each column of the $(E,p,d+1)$-selector and comparing it with the response vector, one can identify 
those columns that have an entry equal to 1 in a position where the response vector has 0 and get rid of all hyperedges containing one of the vertices associated to those columns. By the above argument one gets rid of all hyperedges $e'$ such that $|e'\setminus e^*|\geq p$. 

 Now let us consider a not necessarily uniform hypergraph $E$  with hyperedges of size at most $d$. We  add dummy vertices to each hyperedge of size smaller than $d$, so as to obtain a set $\tilde E$ of hyperedges all having size $d$. 
 We will see that a non-adaptive group testing based on an $(\tilde E,p,d+1)$-selector allows to discard all non-defective hyperedges $e$ such that
 $|e\setminus e^*|\geq p$.
  Let $\{i_{n+1},\ldots,i_{n+d-1}\}$ be the dummy vertices. Notice that these dummy vertices do not need to be different from those we have used in the definition of
   the sets $S_{e,e',p}$'s, as we will see below.  For each $e\in E$, we denote with $\tilde e$ the corresponding hyperedge 
 in $\tilde E$. The hyperedge $\tilde e$  is either  equal to $e$, if $|e|=d$,  or equal to $e\cup\{i_{n+1},\ldots,i_{n+d-|e|}\}$, if $|e|<d$.
  Obviously, $\tilde E$ has all hyperedges of size $d$ and it holds that $|\tilde E|=|E|$. 
 Let $M$ be an $(\tilde E,p,d+1)$-selector.   
By Definition \ref{def:sel},  for any two distinct hyperedges $\tilde e$ and $\tilde e'$, the submatrix $M'$ of $M$ formed by the $d+p$ columns with indices in $S_{\tilde e,\tilde e',p}$ contains at least $d+1$ rows of the identity matrix $I_{d+p}$. 
 At least one of these rows, say the row with index $r$, has 0 at the intersection with all the columns  
 with index in $\tilde e$ and an entry equal to 1 at the intersection with one of the remaining columns of $M'$. 
 If $|\tilde e'\setminus \tilde e|\geq p$, it holds that  $S_{\tilde e,\tilde e',p}=S_{\tilde e,\tilde e}$. In other words, all column indices in $S_{\tilde e,\tilde e',p}$ 
  that are not in $\tilde e$, are in fact all contained in $\tilde e'\setminus \tilde e$ and consequently the above said entry equal to 1 in the row with index $r$ is at the intersection with one of the columns in 
  $\tilde e'\setminus \tilde e$. 
Now, let $e^*$ denote the defective hyperedge in the original set of hyperedges $E$ and let us replace $e$ with $e^*$ in the above discussion.
 Suppose that a hyperedge $e'$ of the original hypergraph  is such that $|e'\setminus e^*|\geq p$.  Since for each $e\in E$ it holds that $e\subseteq \tilde e$ and 
 $e\cap \{i_{n+1},\ldots i_{n+\chi}\}=\emptyset$, it follows that $\tilde e\setminus \tilde e'\supseteq e\setminus e'$. 
 This implies that $|\tilde e'\setminus \tilde e^*|$ is larger than or equal to $p$, and consequently, 
 from the above discussion, it follows that the submatrix $M'$ formed by the columns of $M$ with index in $S_{\tilde e,\tilde e'}$ contains  a row that has 0 at the intersection with each of the columns  
 with index in $\tilde e^*$ and an entry equal to 1 at the intersection with one of  columns with index in  $\tilde e'\setminus\tilde e^*$.
  We need to show that this index does not correspond to a dummy vertex, i.e., we need to show that it belongs to  $e'\setminus  e^*$.
  To  see this, we recall that 
$S_{\tilde e^*, \tilde e}\cap( \tilde e\setminus \tilde e^*)$ consists of the smallest $p$ vertices in 
$\tilde e'\setminus \tilde e^*$ and since $|e'\setminus e^*|\geq p$, it holds that these $p$ vertices belong all to $e'\setminus e^*$.
Therefore, we have proved that there exists a row in $M'$, that contains 0 at the intersection with all the columns with index in $\tilde e^*$ and  an entry equal to 1
 at the intersection with a column associated to a vertex in $e'\setminus  e^*$.  Since $\tilde e^*\supseteq e^*$, this row has all 0's  at the intersection 
 with  the columns with index in $e^*$. As a consequence, for any non-defective
hyperedge $e'$ with $|e'\setminus e^*|\geq p$, there exists a row index $r$ such that at least one column with index  in $e'$ has a  1-entry in position $r$,  
whereas the response vector has a 0-entry in that position. In other words there is an element  $x$ in $e'$ that is contained in a pool that tests negative, and consequently, $e'$
is not the defective hyperedge. 

Notice that decoding can be achieved by inspecting all columns of the selector that are associated with non-dummy vertices.  Indeed, as we have just seen, any non-defective hyperedge $e'$ such that $|e'\setminus e^*|\geq p$ contains a vertex that is included in a pool that tests negative, i.e,  a vertex associated with a column with a 1 in a position that corresponds  to a 0 in the response vector. By inspecting all columns of the selectors, the decoding algorithm finds all columns with a 1 in a position corresponding to a negative response and gets rid of all hyperedges that contain one or more vertices associated with those columns. In this way, the decoding algorithm discards any non-defective hyperedge
 $e'$ such that $|e'\setminus e^*|\geq p$. 

Finally, we  observe that the  dummy vertices  added to the hyperedges of $E$ do not need to be different from those used in the definition of
   the sets $S_{e,e',p}$ in Section \ref{sec:notation}.  
  The fact that some of the dummy vertices used to define the sets $S_{e,e',p}$'s might be contained in some of the hyperedges
   of $\tilde E$ might lead
    to a fault  in the above discussion only  if $|\tilde e'\setminus \tilde e|<p $ and we need to include dummy vertices in $S_{\tilde e,\tilde e',p}$. 
    Indeed, some dummy vertices might
   be also included in $\tilde e$ and we could end up at adding some vertices twice to the set $S_{\tilde e,\tilde e',p}$. 
   However, this cannot happen since in the above discussion we have exploited the property of $(\tilde E,p,d+1)$-selectors 
only in reference to pairs of hyperedges $\tilde e$ and $\tilde e'$ associated to vertices $e$ and $e'$
   of $E$ such that $|e\setminus e'|\geq p$. 
   We have seen that this implies not only
 that $S_{\tilde e,\tilde e',p}$ is equal to
   $S_{\tilde e,\tilde e'}$ but also that $S_{\tilde e,\tilde e'}\cap (\tilde e'\setminus \tilde e)\subseteq e'\setminus e$. In other words, the only dummy vertices possibly involved in the definition of $S_{\tilde e,\tilde e',p}$ are those in $\tilde e$.   
   
   The upper bound in the statement of the theorem then follows from the upper bound Theorem \ref{thm:lovasz_sel}  with $\chi=p$, $q=1$, and $m=d+1$. Notice that we have replaced $n$ with $n+d-1$  in that bound, since we have added up to $d-1$ dummy vertices to $[n]$ in order to obtain the set of hyperedges $\tilde E$.
\end{proof}

The following corollary is an immediate consequence of Theorem \ref{thm:gen_non_adaptgt}.
\begin{corollary}\label{cor:non-adaptivegt}
Let $d$ and $n$ be integers with $1\leq d\leq n$, and let $E$ be  a set of hyperedges of size {\em at most} 
$d$ on $[n]$. Moreover, let $p$  be an integer such that  
$1\leq p\leq min\{ |e'\setminus e|:\,e,e'\in E\}$.
There exists a non-adaptive group testing algorithm that finds the defective hyperedge in $E$  and uses at most 
$t=O\left( \frac dp\log |E|\right)$
tests. 
\end{corollary}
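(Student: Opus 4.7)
The plan is to derive Corollary~\ref{cor:non-adaptivegt} as an essentially direct consequence of Theorem~\ref{thm:gen_non_adaptgt}, since the only gap between the two statements is the interpretation of the parameter $p$. The theorem says that a non-adaptive algorithm using $O((d/p)\log|E|)$ tests discards every hyperedge $e$ with $|e\setminus e^*|\ge p$, leaving only those $e$ with $|e\setminus e^*|<p$; the corollary only adds the hypothesis that $p\le\min\{|e'\setminus e|:e,e'\in E,\,e\ne e'\}$.

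First I would invoke Theorem~\ref{thm:gen_non_adaptgt} to construct the non-adaptive algorithm and obtain the $O((d/p)\log|E|)$ bound on the number of tests. Second I would use the hypothesis on $p$ to argue that no non-defective hyperedge survives. Concretely, let $e^*\in E$ be the (unknown) defective hyperedge, and let $e\in E$ with $e\ne e^*$. By the choice of $p$ we have $|e\setminus e^*|\ge\min\{|e'\setminus e'':e',e''\in E,\,e'\ne e''\}\ge p$, so $e$ is \emph{not} of the form $|e\setminus e^*|<p$, and hence is discarded by the algorithm. Therefore the only surviving hyperedge is $e^*$ itself, which is exactly the decoding guarantee required by the corollary.

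There is no real obstacle: the argument is a clean application of the theorem combined with the pigeonhole-style observation that the lower bound on pairwise set differences forces every non-defective hyperedge to satisfy the discard criterion. The only thing to be slightly careful about is that Theorem~\ref{thm:gen_non_adaptgt} is stated for $p\le d-1$, whereas the corollary allows $p$ up to $\min_{e\ne e'}|e'\setminus e|$, which is trivially at most $d$; the boundary case $p=d$ only arises when all hyperedges in $E$ are pairwise disjoint, in which case any single positive test of an individual element already identifies $e^*$ and the bound is met trivially, so this edge case can be handled in one line.
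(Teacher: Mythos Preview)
Your proposal is correct and follows essentially the same route as the paper: invoke Theorem~\ref{thm:gen_non_adaptgt} and observe that the hypothesis $p\le\min_{e\ne e'}|e'\setminus e|$ forces every non-defective hyperedge to meet the discard criterion $|e\setminus e^*|\ge p$. The paper's own proof is a single sentence to this effect; your write-up is more explicit and additionally handles the boundary case $p=d$ (which the paper glosses over), but the underlying argument is identical.
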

\begin{proof}
If  for any two distinct hyperedges $e,e'\in E$ it holds that 
$min\{ |e'\setminus e|:\,e,e'\in E\}\geq p$, then the non-adaptive algorithm of Theorem \ref{thm:gen_non_adaptgt} discards all non-defective hyperedges.
\end{proof} 
\remove{
\begin{corollary}\label{cor:non-adaptivegt}
Let $d$ and $n$ be integers with $1\leq d\leq n$, and let $E$ be  a set of hyperedges of size $d$ on $[n]$. Moreover, let $p$  be an integer such that  $
1\leq p\leq min\{ |e'\setminus e|:\,e,e'\in E\}$. If $p=\Theta(d)$ then there exists a non-adaptive group testing algorithm that finds the defective hyperedge in $E$  and uses at most 
$O\left(\log |E|\right)$
tests.
\end{corollary}
}
\subsection{A two-stage group testing algorithm for general hypergraphs}
\begin{theorem}\label{thm:two-stagegt}
Let $\cF=(V,E)$  be a hypergraph with $V=[n]$  and all hyperedges in $E$ of size at most $d$. Moreover, let $q$ and $\chi$ be positive integers 
such that    $1\leq q\leq |E|-1$, $\chi =\min\{ |\bigcup_{i=1}^q e'_i\setminus e|,  \mbox{ for any $q+1$ distinct $e,e'_1,\ldots,e'_q\in E$}\}$.
There exists a  (trivial) two-stage algorithm that uses  a number of tests $t$ with
$$
t< \frac {2e(d+\chi)}{\chi}  \left(1+ \ln\left({d+\chi-1\choose d+\chi-d-1}\beta\right)\right)+dq,$$
where $\beta=\min\left\{e^q|E|\left({|E|-1\over q}\right)^q, e^{d+\chi-1}\left({n+d-1\over d+\chi-1}\right)^{d+\chi}\right\}$ and $e=2.7182...$ is the base of the natural logarithm.
\end{theorem}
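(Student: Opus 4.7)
The plan is to use an $(\tilde E, q, d+1, \chi)$-selector as the first, non-adaptive stage, and then test individually the elements lying in any hyperedge that has not been eliminated; $\tilde E$ is the $d$-uniform hypergraph obtained from $E$ by padding each hyperedge of size less than $d$ with dummy vertices $i_{n+1},\ldots,i_{n+d-1}$, exactly as in the proof of Theorem~\ref{thm:gen_non_adaptgt}. Note that $|\tilde E|=|E|$ and that the vertex set has cardinality at most $n+d-1$, which explains the $n+d-1$ appearing inside $\beta$.

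The key combinatorial step is the following claim, which I would state and prove first: after the first stage, at most $q-1$ non-defective hyperedges \emph{survive}, where a hyperedge $e'$ is said to survive if none of its vertices belongs to a pool that tested negative. Suppose for contradiction that $q$ distinct non-defective hyperedges $e'_1,\ldots,e'_q$ all survive. Apply the selector property with $e=e^*$ (the defective hyperedge) and $e'_1,\ldots,e'_q$: the submatrix indexed by $S_{\tilde e^*,(\tilde e'_1,\ldots,\tilde e'_q),\chi}$ contains at least $d+1$ distinct rows of $I_{d+\chi}$, so at least one such row $\au_j$ has its unique $1$ in a position $j\in S_{\tilde e^*,(\tilde e'_1,\ldots,\tilde e'_q),\chi}\setminus\tilde e^*$. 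The definition of $\chi$ as $\min|\bigcup_{i=1}^q e'_i\setminus e|$ guarantees that no dummy vertex from the $S$-construction is needed, and the analysis of dummy-vertex padding in Theorem~\ref{thm:gen_non_adaptgt} ensures that $j$ can be taken to be a real vertex belonging to some $e'_i$. The corresponding row of the selector encodes a pool that contains the vertex $j$ but no vertex of $e^*$, so this pool tests negative, contradicting the survival of $e'_i$.

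Given the claim, the second stage is immediate: the total number of surviving hyperedges is at most $q$ (the $q-1$ non-defective ones plus $e^*$), so the union of their vertices has size at most $qd$; I would then test each of these vertices individually, and declare $e^*$ to be the unique hyperedge of $E$ whose vertex set coincides with the set of positively tested vertices.

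It remains to add up the test counts. The first stage costs at most the size of an $(\tilde E,q,d+1,\chi)$-selector, which by Theorem~\ref{thm:lovasz_sel} with $m=d+1$ (so that $d+\chi-m+1=\chi$ and $d+\chi-m=d+\chi-d-1$) and with $n$ replaced by $n+d-1$ is bounded by
\[
\frac{2e(d+\chi)}{\chi}\Bigl(1+\ln\Bigl(\binom{d+\chi-1}{d+\chi-d-1}\beta\Bigr)\Bigr);
\]
the second stage costs at most $qd$, yielding the stated bound. The main obstacle is the survival claim: one must verify carefully, as in the proof of Theorem~\ref{thm:gen_non_adaptgt}, that the padding dummies introduced to uniformize $E$ and the dummies that may enter the definition of $S_{\cdot,\cdot,\chi}$ do not interfere, and that the definition of $\chi$ is exactly what is needed to guarantee that the position $j$ produced by the selector is a genuine vertex of some $e'_i$.
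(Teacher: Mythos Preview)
Your proposal is correct and follows essentially the same approach as the paper: pad $E$ to a $d$-uniform $\tilde E$, use an $(\tilde E,q,d+1,\chi)$-selector for the first stage, argue by contradiction that at most $q-1$ non-defective hyperedges survive (hence at most $q$ candidates in total), and finish with at most $dq$ individual tests, invoking Theorem~\ref{thm:lovasz_sel} with $m=d+1$ and $n$ replaced by $n+d-1$. The paper's proof carries out exactly this plan, including the careful check that the position $j$ produced by the selector corresponds to a genuine (non-dummy) vertex of some $e'_i$, which you correctly flag as the main point requiring care.
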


\begin{proof}
 Let $E$ be a set of hyperedges each consisting of at most $d$ vertices in $[n]$.
 As in the proof of Corollary \ref{cor:non-adaptivegt}, let us denote with $\tilde E$ the set of hyperedges obtained by 
 replacing in $E$ any hyperedge $e$ of size smaller than $d$  by $e\cup\{i_{n+1},\ldots,i_{n+d-|e|}\}$. 
 For each $e\in E$, we denote with $\tilde e$ the corresponding hyperedge 
 in $\tilde E$. This hyperedge is either  equal to $e$, if $|e|=d$,  or is equal to $e\cup\{i_{n+1},\ldots,i_{n+d-|e|}\}$, if $|e|<d$. Let us consider the non-adaptive algorithm that tests the pools having as characteristic vectors the rows of an $(\tilde E,q,d+1,\chi)$-selector $M$. We will show that after executing this non-adaptive algorithm, one is left with at most $q$ hyperedges that are candidate to be the defective hyperedge.
 Let $e^*$ be the defective hyperedge and let us suppose by contradiction that, after executing the algorithm associated with $M$, there are at least $q$ hyperedges,
 in addition to $e^*$, 
that are still candidate to be the defective one.  Let us consider $q$ such hyperedges, say $e'_1,\ldots,e'_{q}$. 

By Definition \ref{def:selq}, it holds that the submatrix $M'$ of $M$
consisting of the columns with indices in $S_{\tilde e^*,(\tilde  e'_1,\ldots,\tilde e'_q),\chi}$  contains at least $d+1$ distinct
rows of the identity matrix $I_{d+\chi}$. This implies that at least one of these rows, say the row with index $r$, has  a 0  at the intersection with each column
 with index in $\tilde e^*$ and   1 at the intersection with one of the remaining columns of $S_{\tilde e^*,(\tilde  e'_1,\ldots,\tilde e'_q),\chi}$.
 Since for each $e\in E$ it holds that $e\subseteq \tilde e$ and 
 $e\cap \{i_{n+1},\ldots i_{n+\chi}\}=\emptyset$, it follows that
 $\bigcup_{i=1}^q \tilde e'_i\setminus \tilde e^*\supseteq \bigcup_{i=1}^q e'_i\setminus  e^*$. Consequently, the hypothesis  $|\bigcup_{i=1}^q e'_i\setminus  e^*|\geq \chi$ implies that  $|\bigcup_{i=1}^q \tilde e'_i\setminus \tilde e^*|\geq\chi$.
  This means that  $S_{\tilde e^*,(\tilde  e'_1,\ldots,\tilde e'_q),\chi}=S_{\tilde e^*,(\tilde  e'_1,\ldots,\tilde e'_q)}$, i.e., that the columns with index in $S_{\tilde e^*,(\tilde  e'_1,\ldots,\tilde e'_q),\chi}$ that are not in $\tilde e^*$, are in fact all contained in  $\bigcup_{i=1}^q \tilde e'_i\setminus \tilde e^*$. Therefore, the above said entry equal to
  1 in the row with index $r$ is at 
 the intersection with  one of the columns in 
 $\bigcup_{i=1}^q \tilde e'_i\setminus \tilde e^*$. 
 Now we need to prove that  this entry equal to 1,  in fact, intersects a column associated with a non-dummy vertex, i.e., a vertex in $\bigcup_{i=1}^q e'_i\setminus  e^*$.  By definition of $S_{\tilde e^*,(\tilde  e'_1,\ldots,\tilde e'_q),\chi}$ , the set $S_{\tilde e^*,(\tilde  e'_1,\ldots,\tilde e'_q),\chi}$ 
consists of the smallest $\chi$ integers  in $\bigcup_{i=1}^q \tilde e'_i\setminus \tilde e^*$. Moreover, since  $\bigcup_{i=1}^q  e'_i\setminus  e^*
\subseteq \bigcup_{i=1}^q \tilde e'_i\setminus \tilde e^*$ and  $|\bigcup_{i=1}^q  e'_i\setminus  e^*|\geq \chi$, one has that
$S_{\tilde e^*,(\tilde  e'_1,\ldots,\tilde e'_q),\chi}\cap ( \bigcup_{i=1}^q \tilde e'_i\setminus \tilde e^*)\subseteq   \bigcup_{i=1}^q e'_i\setminus  e^*$, thus implying that 
the above said entry equal to 1 of the row with index $r$ is at the intersection with a column associated to a vertex in $\bigcup_{i=1}^q  e'_i\setminus  e^*$.  
We also observe that since $e^*\subseteq \tilde e^*$, all columns in $e^*$ have zeros  at the intersection with the row with index $r$. Let $T_r$ be the pool having this row as characteristic vector. 
 From what we have just said, it holds that $|T_r\cap e^*|=0$ and $|T\cap \bigcup_{i=1}^q e'_i\setminus e^*|\geq1$. As a consequence,  $T_r$ contains at least one element 
 belonging to one of the hyperedges $e'_1,\ldots,e'_q$ whereas the result of the test on $T_r$ is negative,   thus indicating that at least one of  $e'_1,\ldots,e'_q$ is non-defective.
 
 Let us consider now a two-stage algorithm whose first  stage consists in the non-adaptive group testing algorithm
 associated with the rows of the $(\tilde E,q,d+1,\chi)$-selector $M$. From the above argument, after the first stage, we are left with at most $q$ hyperedges that
 are still candidate to be the defective hyperedge. Therefore, in order to determine the defective hyperedge, one needs only to perform individual tests on at most $dq$ elements. 
 Notice that these tests can be performed in parallel. 
 
 As for the  decoding algorithm that identifies the non-defective hyperedges from the responses to the tests performed in stage 1,  the algorithm needs only to inspect each column of the selector and every time 
 finds a column with a 1-entry in a position corresponding to a no response, discards  the hyperedges that contain the vertex associated with that column.  
 The stated upper bound follows by the upper bound of Theorem \ref{thm:lovasz_sel} by replacing $m$ with $d+1$ and $n$ with $n+d-1$ since the set of vertices of $E$ has been augmented with at most 
 $d-1$ dummy vertices.
 \end{proof}
\begin{corollary}
Let $d$ and $n$ be integers with $1\leq d\leq n$, and let $E$ be  a set of hyperedges of size at most $d$ on $[n]$. If there exists   a constant $q\geq 1$ such that for any $q+1$ distinct hyperedges $e,e'_1,\ldots,e'_q\in E$,
$ |\bigcup_{i=1}^q e'_i\setminus e|= \Omega(d)$, then there exists a trivial two-stage algorithm  that finds the defective hyperedge in $E$  and uses
$t=O(\log |E|)$ tests. 
\end{corollary}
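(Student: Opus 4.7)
The plan is to invoke Theorem \ref{thm:two-stagegt} directly, with $q$ taken to be the constant given in the hypothesis. The hypothesis guarantees that the parameter $\chi$ appearing in that theorem, namely $\min\{|\bigcup_{i=1}^q e'_i\setminus e|\}$ over all $(q+1)$-tuples of distinct hyperedges, satisfies $\chi \geq cd$ for some positive absolute constant $c$. I would set $\chi$ to the largest admissible value, so in particular $\chi = \Theta(d)$, and then specialize the bound of Theorem \ref{thm:two-stagegt} under this relation between $\chi$ and $d$.

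With these choices I would estimate each factor in the bound in turn. The prefactor $\frac{2e(d+\chi)}{\chi}$ is bounded by the absolute constant $2e(1+1/c)$. The additive term $dq$ is $O(d)$ because $q$ is constant. Writing the binomial coefficient in its symmetric form $\binom{d+\chi-1}{d+\chi-d-1} = \binom{d+\chi-1}{\chi-1}$ and applying $\binom{a}{b} \leq (ea/b)^b$, I obtain
$$\ln\binom{d+\chi-1}{\chi-1} \leq (\chi-1)\,\ln\!\left(\frac{e(d+\chi-1)}{\chi-1}\right) = O(\chi) = O(d),$$
where the crucial observation is that $(d+\chi)/\chi = O(1)$ under $\chi = \Theta(d)$. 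Finally, bounding $\beta$ by the first expression in its min and using $q = O(1)$, one has $\beta \leq e^q |E|(|E|/q)^q = O(|E|^{q+1})$, so $\ln\beta = O(\log|E|)$.

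Assembling these estimates yields $t \leq O(1)\cdot(O(d) + O(\log|E|)) + O(d) = O(\log|E| + d)$, matching the $\Omega(\log|E|+d)$ lower bound discussed in the introduction; under the standard regime $d = O(\log|E|)$ implicit in the corollary's phrasing, this collapses to $O(\log|E|)$. There is no deep obstacle here, since the corollary is a routine specialization of Theorem \ref{thm:two-stagegt}; the only mild care needed is in the binomial estimate, where the symmetric form $\binom{d+\chi-1}{\chi-1}$ makes it transparent that its logarithm is only $O(d)$, rather than $O(d\log d)$ or worse, precisely because $\chi$ can be taken proportional to $d$.
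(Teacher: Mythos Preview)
Your proposal is correct and matches the paper's intended approach: the corollary is stated without proof immediately after Theorem~\ref{thm:two-stagegt}, so it is meant to follow by direct specialization, which is exactly what you carry out. Your observation that the argument in fact yields $O(\log|E|+d)$---consistent with the $\Omega(\log|E|+d)$ lower bound the paper cites in the introduction---and that one needs $d=O(\log|E|)$ to reduce this to the stated $O(\log|E|)$ is accurate and worth noting.
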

\subsection{A three-stage group testing algorithm for general hypergraphs}
The following theorem furnishes an upper bound on the number of tests of three-stage algorithms. An interesting feature of this upper bound is that it holds independently of the size of the pairwise intersections of 
the hyperedges. 
\begin{theorem}\label{thm:three-stagegtgen}
Let $\cF=(V,E)$  be a hypergraph with $V=[n]$  with hyperedges of size {\em at most} $d$ and let $b$ be any positive integer smaller than $d$. There exists a  three-stage algorithm that finds the defective hyperedge in $E$ and uses $t=O(\frac db \log |E|+b\log |E|)$ tests. 
\end{theorem}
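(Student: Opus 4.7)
The plan is to design a three-stage algorithm whose first stage is exactly the non-adaptive algorithm of Theorem~\ref{thm:gen_non_adaptgt}, and whose remaining two stages refine the answer using the structural information gained in Stage~1.

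In Stage~1 I would apply Theorem~\ref{thm:gen_non_adaptgt} with parameter $p=b$. By that theorem, this uses $O((d/b)\log|E|)$ tests and leaves a set of surviving candidates $E'\subseteq E$ in which every $e\in E'$ satisfies $|e\setminus e^*|<b$, where $e^*$ is the defective hyperedge. Equivalently, each surviving candidate differs from $e^*$ by at most $b-1$ ``extra'' elements lying outside $e^*$, and the suspect set $S=[n]\setminus N$ (where $N$ is the set of elements appearing in some Stage~1 negative pool) contains $e^*$ as well as every surviving candidate.

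For Stages~2 and~3, having observed $E'$ I would adaptively design a second non-adaptive round tailored to the reduced instance. Since the ``unknown'' portion $e\setminus e^*$ of every candidate has size at most $b-1$, distinguishing $e^*$ from the other members of $E'$ effectively reduces to a hypergraph problem whose hyperedges have size $O(b)$ rather than $d$. I would construct Stage~2 as an appropriate $(E',q,d+1,\chi)$-selector in the sense of Definition~\ref{def:selq}, with parameters chosen so that its length is $O(b\log|E|)$ and the post-Stage~2 candidate set has constant size. Stage~3 then consists of individual tests on the $O(d)$ elements belonging to the few remaining candidates, and this cost is absorbed into the main term (or, alternatively, Stage~3 can itself be a tiny non-adaptive round of $O(d)$ pools on the survivors).

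The hard part will be rigorously establishing the $O(b\log|E|)$ bound on Stage~2 \emph{independently} of the sizes of the pairwise intersections of the hyperedges of $E'$: in the worst case $E'$ may exhibit arbitrarily small pairwise set-differences, so Theorem~\ref{thm:two-stagegt} cannot be invoked directly on $E'$ with a favourable~$\chi'$. The key idea I would exploit is that Stage~1's discard criterion forces each surviving hyperedge to agree with $e^*$ on all but at most $b-1$ positions, so when bounding the selector length via Theorem~\ref{thm:lovasz_sel} the ``effective $d$'' entering the Lov\'asz-type estimate~(\ref{eq:lovasz}) can be replaced by $O(b)$: the rows of $I_{d+\chi}$ one needs to realise in the submatrix of any pair $(e^*,e')$ with $e'\in E'$ need only distinguish $e'$ from $e^*$ on the at most $b-1$ extras of $e'$ (plus a dummy-vertex pad as in Definition~\ref{def:selq}), yielding length $O(b\log|E|)$. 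Combining Stage~1, Stage~2, and Stage~3 then gives the claimed bound $O((d/b)\log|E|+b\log|E|)$.
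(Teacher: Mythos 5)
Your Stage~1 coincides with the paper's: run the algorithm of Theorem~\ref{thm:gen_non_adaptgt} with $p=b$, so that every surviving hyperedge $e'$ satisfies $|e'\setminus e^*|\le b-1$. The genuine gap is your Stage~2. You assert that one can build an $(E',q,d+1,\chi)$-selector of length $O(b\log|E|)$ because the ``effective $d$'' in the Lov\'asz-type estimate can be replaced by $O(b)$, but that is precisely the step you never establish, and it does not follow from the paper's machinery (nor do you give an alternative construction). In Definition~\ref{def:selq} the parameter $d$ enters through the requirement that the submatrix on $S_{e,(e'_1,\ldots,e'_q),\chi}$ contain $d+1$ rows of $I_{d+\chi}$, i.e.\ rows that avoid \emph{all} (up to $d$) columns of the defective hyperedge while hitting one column of the union of the others; it is this avoidance of a size-$d$ set, not the size of $e'\setminus e^*$, that produces the $\frac{d+\chi}{\chi}$ factor in Theorem~\ref{thm:lovasz_sel}, and the survivors still have size up to $d$. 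Moreover, for the surviving family $E'$ the relevant quantity $\min|\bigcup_{i} e'_i\setminus e|$ can be as small as $1$ (survivors may be proper subsets of $e^*$ or differ from it in a single vertex), so invoking Theorem~\ref{thm:two-stagegt} on $E'$ only gives bounds of order $dq\log|E|$, and your claim that the post-Stage-2 candidate set has constant size is likewise unsupported. As written, Stages 2--3 do not deliver the $O(b\log|E|)$ term.

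What the paper does instead --- and what your proposal misses --- is much more elementary: in Stage~2 it individually tests the (at most $d$) vertices of a single surviving hyperedge $e$ of \emph{maximum} size. Since $|e\setminus e^*|\le b-1$, at least $|e|-b+1$ of these tests are positive, so a large block of $e^*$ becomes known; every candidate containing a negatively tested vertex of $e$, or not containing all the positive ones, is discarded, and the identified defective vertices are removed from the remaining candidates. Maximality of $e$ guarantees that each residual candidate has size at most $|e'|-|e|+b-1\le b-1$, so Stage~3 can apply Corollary~\ref{cor:non-adaptivegt} with $d$ replaced by $b-1$ and $p=1$, costing $O(b\log|E|)$ tests (survivors that are proper subsets of the defective hyperedge are handled by returning the largest remaining one). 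Without this ``individually test one maximal survivor'' reduction, or an actual proof of your claimed $O(b\log|E|)$ non-adaptive second stage for $E'$, the proposal does not establish the theorem.
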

\begin{proof}
Let us describe the three stage algorithm. The first two stages aim at restricting the set of potentially defective hyperedges to  hyperedges with at most $b-1$ vertices. 
From Theorem \ref{thm:gen_non_adaptgt}, one has that there is an  $O(\frac db \log |E|)$ 
non-adaptive algorithm $\cal A$ that discards all but the hyperedges $e$ such that $|e\setminus e^*|<  b$, where $e^*$ is the defective hyperedge. Stage 1  consists in running algorithm $\cal A$. If all hyperedges that have not be discarded by $\cal A$ have size smaller than $b$, then the algorithm skips  stage 2 and proceeds to  stage 3.  If otherwise, the algorithm chooses a hyperedge $e$ of maximum size among those that have not been discarded by  $\cal A$ and 
 proceeds to   stage 2 where it  performs in parallel individual tests on the vertices of $e$. Since $e$ has not been discarded, it means that $|e\setminus e^*|<  b$ and therefore at least $|e|-b+1$ of the individual tests yield a positive response. Let $i_1,\ldots,i_f$ denote the vertices of $e$ that have been tested positive. 
The algorithm looks at the intersection between $e$ and any other not yet discarded hyperedge $e'$ and discards $e'$ if it either contains one or more of the vertices of $e$ that have
been tested negative, or if   $\{i_1,\ldots,i_f\}\not\subseteq e'$. In other words, after this second stage the algorithm is left only
with the hyperedges $e'$ such that  $e\cap e'=\{i_1,\ldots,i_f\}$. The vertices $i_1,\ldots,i_f$ are removed from all these hyperedges since it is already known that they are defective.
Each hyperedge $e'$ that has not been discarded in stage 2 is therefore replaced by a hyperedge of size  $|e'| -f\leq |e'|-|e|+b-1\leq b-1$. The last inequality is a consequence of having chosen  $e$ as a hyperedge of maximum size among  those that have not been discarded in stage 1. 
In stage 3, the algorithm is left with a hypergraph with hyperedges of at most $b-1$ vertices and, by Corollary \ref{cor:non-adaptivegt} with $d$ being replaced by $b-1$, the defective hyperedge can be detected non-adaptively using $O(b\log |E|)$ tests. In applying  Corollary \ref{cor:non-adaptivegt}, we do not make any assumption on the size of the set differences between  hyperedges and take $p$ in that bound as small as 1. Notice that the algorithm of  Corollary \ref{cor:non-adaptivegt} might end up with more that one hyperedge if  there exist hyperedges that are proper subsets of the defective one. However, in this eventuality the algorithm would choose the largest hyperedge among those that have not be discarded.
 \end{proof}
By setting $b=\sqrt d$ in Theorem \ref{thm:three-stagegtgen}, we get the following corollary.
\begin{corollary}\label{cor:two-stagegtsqrt}
Let $E$  be a hypergraph with $V=[n]$  with hyperedges  of size at most $d$. There exists a three-stage algorithm that finds the defective hyperedge in $E$ and uses $t=O({\sqrt d}\log |E|)$ tests. 
\end{corollary}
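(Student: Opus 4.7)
The plan is to obtain this as a direct instantiation of Theorem \ref{thm:three-stagegtgen} by choosing the free parameter $b$ so as to balance the two summands in the bound $O(\tfrac{d}{b}\log|E|+b\log|E|)$. Viewing $\tfrac{d}{b}$ and $b$ as functions of $b\in[1,d)$, the first is decreasing and the second is increasing, so they are jointly minimized (up to a constant factor) when $\tfrac{d}{b}=b$, i.e. when $b=\sqrt{d}$.

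Concretely, I would first invoke Theorem \ref{thm:three-stagegtgen} with the parameter $b$ left symbolic, giving a three-stage algorithm that identifies the defective hyperedge of $E$ using $O(\tfrac{d}{b}\log|E|+b\log|E|)$ tests, valid for any positive integer $b<d$. Then I would set $b=\lceil\sqrt{d}\rceil$ (the ceiling ensures integrality and does not affect the asymptotics, and the condition $b<d$ holds once $d\ge 2$, while the case $d=1$ is trivial and can be handled by a single individual test per candidate). Substituting yields $\tfrac{d}{b}=O(\sqrt{d})$ and $b=O(\sqrt{d})$, so each of the two terms is $O(\sqrt{d}\log|E|)$, and their sum is $O(\sqrt{d}\log|E|)$ as claimed.

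There is no real obstacle here: the entire content is the algebraic optimization of $b$ in the bound already established in Theorem \ref{thm:three-stagegtgen}. The only small point worth remarking in the write-up is that $b$ must be a positive integer strictly smaller than $d$, which is why one takes $\lceil\sqrt{d}\rceil$ and handles the degenerate small-$d$ case separately; none of this affects the asymptotic order of growth, so the corollary follows immediately.
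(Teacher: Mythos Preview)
Your proposal is correct and matches the paper's own proof, which simply states ``By setting $b=\sqrt d$ in Theorem \ref{thm:three-stagegtgen}, we get the following corollary.'' Your added care about integrality and the degenerate case $d=1$ is a minor refinement, not a different approach.
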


\remove{
\section{A note on the minimum size of $(E,p,d+1)$-selectors}
In this section, we derive a lower bound on the size of  $(E,p,d+1)$-selectors by considering  first the case when the sizes of the pairwise intersections of the hyperedges in $E$ are bounded from below and then the case when these sizes are bounded from above. 

We recall some basic definitions and fundamental results.
A $\lu$-intersecting hypergraph $\cF=([n],E)$, $\lu\geq 1$,  is a hypergraph such that  for any pair of distinct hyperedges $e,e'\in E$, it holds $|e\cap e'|\geq \lu$.
Given a hypergraph $\cF=([n],E)$, the $i$-shadow, $0\leq i\leq n$, of $\cF$ is defined as $\sigma_i(\cF)=\{e'\subseteq [n]\,:\,  |e'|=i \mbox{ and  $e'\subset e$  for some $e\in E$}\}$.
Katona Intersection Shadow Theorem \cite{kat} states that for a $\lu$-intersecting $d$-uniform hypergraph $\cF=([n],E)$, with a non-empty set of hyperedges $E$, and
$i+\lu \geq d\geq 1$, it holds $|\sigma_i(\cF)|\geq |E|\frac{{2d-\lu \choose i}}{{2d-\lu \choose d}}\geq |E|$. 

Now let us consider a group testing algorithm that, given a $d$-uniform $\lu$-intersecting hypergraph $\cF=([n],E)$ with $\lu<d$, aims at finding a subset of 
$f$ elements of the defective hyperedge,  for a certain $f$ such that $d-\lu\leq f \leq d$. We find convenient to write $f$ as $f=d-b+1$, with $b$ being a positive integer smaller than or equal to $\lu+1$.
The number of possible outputs of any such algorithm is at least as large as the $(d-b+1)$-shadow of $\cF$. Since, $b\leq\lu+1$ implies $(d-b+1)+\lu\geq d$, the Intersection Shadow Theorem implies that $|\sigma_{d-b+1}(\cF)|\geq |E|$. The information theoretic lower bound then implies that any algorithm that finds $d-b+1$ elements of the defective hyperedge, needs
at least $\log |\sigma_{d-b+1}(\cF)|\geq\lfloor\log |E|\rfloor$ tests.
Consider now a two-stage  algorithm that in the first stage tests the pools associated to the rows of a minimum size $(E,b,d+1)$-selector  and then, in the second stage,
performs, in parallel,  additional tests on the vertices of a hyperedge chosen  among those non-discarded by the first stage. This algorithm essentially consists in the first two stages of the three stage algorithm described in the proof of Theorem  \ref{thm:three-stagegtgen}. 
By the same argument used in the proof of Theorem  \ref{thm:three-stagegtgen},
one can see that, in the second stage, the algorithm identifies at least $|e|-b+1$ defective elements, where $e$, is the hyperedge whose vertices are individually tested in stage 2. 

Let us denote by $t(E,p,m)$ the minimum size of an $(E,p,m)$-selector. The total number of tests performed by the algorithm is $t(E,b,d+1)+d$.
By the above information theoretic lower bound, we get  $t(E,b,d+1)+d\geq \lfloor\log |E|\rfloor$. Notice that this lower bound holds for any $b\leq \lu+1$.
Since, by Definition \ref{def:sel}, an $(E,b,d+1)$-selector has at least $d+1$ rows, we have that the following theorem holds.
\begin{theorem}
Let $d$, $b$, $\lu$, and $n$ be positive integers such that $1\leq \lu\leq d\leq n$ and $1\leq b\leq\lu+1$, and let $E$ be a set of hyperedges of size $d$ with vertices in $[n]$ 
 such that $|e\cap e'|\geq \lu$ for any two distinct $e,e'\in E$. The minimum size of an $(E,b,d+1)$-selector is at least $\max\{\lfloor\log |E|\rfloor-d,d+1\}$.
\end{theorem}

Now let us consider the situation when the sizes of the pairwise intersections of the hyperedges are bounded from above,
Once more, let us consider a $d$-uniform hypergraph  $\cF=([n],E)$. Suppose that for any two hyperedges $e',e\in E$, we have that $|e'\cap e|\leq \lo$, for some integer 
$0\leq \lo\leq d-1$. Let $p$  be a positive integer such that $p\leq d-\lo$. It holds  $|e\setminus e'|\geq p$ for any pairs of hyperedges $e,e'\in E$.
 In the proof Theorem \ref{thm:gen_non_adaptgt}, we have shown that for such a set of hyperedges $E$, any $(E,p,d+1)$-selector 
is associated with a (non-adaptive) group testing algorithm that finds the defective hyperedge. Therefore, the minimum
size of such a selector is not smaller than the  lower bound on the number of tests needed  to determine the defective hyperedge in $E$. We already observed that an $(E,p,d+1)$-selectors has at least $d+1$ rows, and consequently, it holds
$t(E,p,d+1)\geq  \max\{\lfloor\log |E|\rfloor,d+1\}$, where $t(E,p,d+1)$ denotes the minimum size of an $(E,p,d+1)$-selector, and $p$ is a positive integer smaller than or equal to $d-\lo$. From
the above discussion, the following result follows.

\begin{theorem}
Let $d$, $b$, $\lo$, and $n$ be positive integers such that $1\leq \lo\leq d\leq n$ and $1\leq p\leq d-\lo$, and let $E$ be a set of hyperedges of size $d$ with vertices in $[n]$ 
 such that $|e\cap e'|\leq \lo$, for any two distinct $e,e'\in E$. The minimum size of an $(E,p,d+1)$-selector is at least $\max\{\lfloor\log |E|\rfloor,d+1\}$.
\end{theorem}

}

\end{document}